
\documentclass[sigconf]{acmart}

\usepackage{balance}
\usepackage{booktabs}
\usepackage[noend]{algpseudocode}
\usepackage{algorithmicx,algorithm}

\def\BibTeX{{\rm B\kern-.05em{\sc i\kern-.025em b}\kern-.08emT\kern-.1667em\lower.7ex\hbox{E}\kern-.125emX}}

%

\copyrightyear{2019}
\acmYear{2019}
\setcopyright{acmcopyright}
\acmConference[ISSAC '19] {2019 International Symposium on Symbolic and Algebraic Computation}{July 15--18, 2019}{Beijing, China}
\acmBooktitle{2019 Int'l Symposium on Symbolic $\&$ Algebraic Computation (ISSAC'19), July 15--18, 2019, Beijing, China}
\acmPrice{15.00}
\acmDOI{10.1145/3326229.3326234}
\acmISBN{978-1-4503-6084-5/19/07}

\settopmatter{printacmref=true}

\fancyhead{}



\begin{document}

\title[Averaging Method Revisited]{An Algorithmic Approach to Limit Cycles of Nonlinear Differential Systems: the Averaging Method Revisited}

\author{Bo Huang}
\affiliation{%
  \institution{LMIB-School of Mathematics and Systems Science,
Beihang University}
  \streetaddress{37 Xueyuan Road}
  \city{Beijing}
  \state{China}
  \postcode{100191}}
\affiliation{%
  \institution{Courant Institute of Mathematical Sciences,\\ New York University}
  \streetaddress{251 Mercer Street}
  \city{New York}
  \country{USA}
  \postcode{10012}}
\email{bohuang0407@buaa.edu.cn}

\author{Chee Yap}
\affiliation{%
  \institution{Courant Institute of Mathematical Sciences,\\ New York University}
  \streetaddress{251 Mercer Street}
  \city{New York}
  \country{USA}
  \postcode{10012}}
\email{yap@cs.nyu.edu}

%

%
\begin{abstract}
This paper introduces an algorithmic approach to the analysis of bifurcation of limit cycles from the centers of nonlinear continuous differential systems via the averaging method. We develop three algorithms to implement the averaging method. The first algorithm allows to transform the considered differential systems to the normal formal of averaging. Here, we restricted the unperturbed term of the normal form of averaging to be identically zero. The second algorithm is used to derive the computational formulae of the averaged functions at any order. The third algorithm is based on the first two algorithms that determines the exact expressions of the averaged functions for the considered differential systems. The proposed approach is implemented in Maple and its effectiveness is shown by several examples. Moreover, we report some incorrect results in published papers on the averaging method.
\end{abstract}

%
%
\begin{CCSXML}
<ccs2012>
 <concept>
  <concept_id>10010520.10010553.10010562</concept_id>
  <concept_desc>Computer systems organization~Embedded systems</concept_desc>
  <concept_significance>500</concept_significance>
 </concept>
 <concept>
  <concept_id>10010520.10010575.10010755</concept_id>
  <concept_desc>Computer systems organization~Redundancy</concept_desc>
  <concept_significance>300</concept_significance>
 </concept>
 <concept>
  <concept_id>10010520.10010553.10010554</concept_id>
  <concept_desc>Computer systems organization~Robotics</concept_desc>
  <concept_significance>100</concept_significance>
 </concept>
 <concept>
  <concept_id>10003033.10003083.10003095</concept_id>
  <concept_desc>Networks~Network reliability</concept_desc>
  <concept_significance>100</concept_significance>
 </concept>
</ccs2012>
\end{CCSXML}

\ccsdesc{Computing methodologies~Symbolic and algebraic manipulation}
\ccsdesc{Symbolic and algebraic algorithms~Symbolic calculus algorithms}

%
\keywords{Algorithmic approach; averaging method; center; limit cycle; nonlinear differential systems}

%

%
\maketitle

\section{Introduction}
Bounding the number of limit cycles for systems of polynomial differential equations is a long standing problem in the field of dynamical systems. As is well known, the second part of the 16th Hilbert's problem \cite{DH00,YI02} asks about ``the maximal number $H(n)$ and relative configurations of limit cycles'' for planar polynomial differential systems of degree $n$:
\begin{equation}\label{eq2.0}
\begin{split}
\dot{x}=f_n(x,y),\quad\dot{y}=g_n(x,y).
\end{split}
\end{equation}
Solving this problem, even in the case $n=2$, at the present state of knowledge seems to be hopeless. While it has not been possible to obtain uniform upper bounds for $H(n)$ in the near future, there has been success in finding lower bounds. Some known results are as follows: it is shown in \cite{lm79,s80} that $H(2)\geq4$ and $H(3)\geq13$ in \cite{ccj09}. In \cite{cn95}, it is proved that $H(n)$ grows at least as rapidly as $n^2\log n$. For the latest development about $H(n)$, we refer the reader to \cite{cl07,jl03}.

Recall that a limit cycle of system \eqref{eq2.0} is an isolated periodic orbit. It is the $\omega$-(forward) or $\alpha$-(backward) limit set of nearby orbits. One classical way of producing limit cycles is by perturbing a differential system which has a center. In this case the perturbed system displays limit cycles that bifurcate, either from the center (having the so-called Hopf bifurcation), or from some of the periodic orbits of the period annulus surrounding the center, see for instance Pontrjagin \cite{lsp34}, the book of Christopher-Li \cite{cl07}, and the hundreds of references quoted there.

In this paper we study the maximal number of limit cycles that bifurcate from the centers of the unperturbed systems (the so-called small-amplitude limit cycles). The main technique is based on the averaging method. We point out that the method of averaging is a classic and mature tool for studying isolated periodic solutions of nonlinear differential systems in the presence of a small parameter. The method has a long history that started with the classical works of Lagrange and Laplace, who provided an intuitive justification of the method. The first formalization of this theory was done in 1928 by Fatou. Important practical and theoretical contributions to the averaging method were made in the 1930s by Bogoliubov-Krylov, and in 1945 by Bogoliubov. The ideas of averaging method have extended in several directions for finite and infinite dimensional differentiable systems. We refer to the books of Sanders-Verhulst-Murdock \cite{svm07} and Llibre-Moeckel-Sim\'o \cite{jrc15} for a modern exposition of this subject.

We remark that most of these previous results developed the averaging method up to first order in a small parameter $\varepsilon$, and at most up to third order. In \cite{jmj13,jdm14} the averaging method at any order was developed to study isolated periodic solutions of nonsmooth but continuous differential systems. Recently, the averaging method has also been extended to study isolated periodic solutions of discontinuous differential systems; see \cite{jad15,jjd17}. In practice, the evaluation of the averaged functions is a computational problem that require powerful computerized resources. Moreover, the computational complexity grows very fast with the averaging order. In view of this, our objective in this paper is to present an algorithmic approach to develop the averaging method at any order and to further study periodic solutions of nonlinear continuous differential systems.

It is known that the Liapunov constants are a good tool for studying the number of small-amplitude limit cycles which can bifurcate from a singular point, i.e., a Hopf bifurcation. Over the years, a number of algorithms for efficient computation of Liapunov constants have been developed (see \cite{gt01,dw91,dw04} for instance). But a disadvantage of such an approach is that there is no clear geometry of the bifurcated limit cycles. In contrast, using the expressions of the averaged functions, we can estimate the size of the bifurcated limit cycles as a function of $\varepsilon$ for $|\varepsilon|>0$ sufficiently small, see \cite{jj07,rj17} for instance.

{\bf Overview of Paper.} The structure of our paper is as follows. In Section \ref{sect2}, we introduce the basic results on the averaging method for planar differential systems before presenting our main results in Section \ref{sect2.5}. We give our algorithms and briefly describe their implementation in Maple in Section \ref{sect4}. Its application is illustrated in Section \ref{sect5} using several examples including a cubic polynomial differential system known as \textit{Collins First Form} and a class of generalized Kukles polynomial differential systems of degree 6. We end with some discussions in Section \ref{sect6}.

In view of space limitation, we moved the proof of Theorem \ref{tt1} to Appendix \ref{A}. Two of the examples are found in Appendices \ref{B} and \ref{C}. The version with appendices may be found at our website
\href{https://cs.nyu.edu/exact/papers/}
{\textcolor{blue}{\underline{\url{https://cs.nyu.edu/exact/papers/}}}} as well as in the arXiv.

\section{Basic Theory of the Averaging Method} \label{sect2}
In this section we introduce the basic results on the averaging method that we shall use for studying the limit cycles which bifurcate from the centers of polynomial differential systems of degree $n_1$ in the form of
\begin{equation}\label{eq2.02}
\begin{split}
\dot{x}=P(x,y),\quad\dot{y}=Q(x,y).
\end{split}
\end{equation}
An accessible reference is \cite{chicone:ode:bk} (see also \cite{svm07}). The following definition is due to Poincar\'e (see \cite{jcm99}, Section 2).
\begin{definition}
We say that an isolated singular point $O$ of \eqref{eq2.02} is a \textit{center} if there exists a punctured neighbourhood $V$ of $O$, such that every orbit in $V$ is a cycle surrounding $O$.
\end{definition}
Without loss of generality we can assume that the center $O$ of system \eqref{eq2.02} is the origin of coordinates. In this case, after a linear change of variables and a rescaling of time variable, we can write system \eqref{eq2.02} in the form
\begin{equation}\label{eq2.03}
\begin{split}
\dot{x}&=\bar{P}_{\bar{\alpha}}(x,y)=-y+\sum_{m=2}^{n_1}P_m(x,y),\\
\dot{y}&=\bar{Q}_{\bar{\beta}}(x,y)=x+\sum_{m=2}^{n_1}Q_m(x,y),
\end{split}
\end{equation}
where $P_m$, $Q_m$ are homogeneous polynomials of degree $m$ in $x$ and $y$ with $\bar{\alpha}$ and $\bar{\beta}$ are parameters appearing as coefficients of $\bar{P}, \bar{Q}$ satisfying that system \eqref{eq2.03} has a center at the origin. It is well known since Poincar\'e \cite{HP88} and Liapunov \cite{mal47} that system \eqref{eq2.03} has a center at the origin if and only if there exists a local analytic first integral of the form $H(x,y)=x^2+y^2+F(x,y)$ defined in a neighborhood of the origin, where $F$ starts with terms of order higher than 2. For the well known center problem, see \cite{vd09,acj17}.

We now consider the perturbations of \eqref{eq2.03} of the form
\begin{equation}\label{eq2.04}
\begin{split}
\dot{x}&=\bar{P}_{\bar{\alpha}}(x,y)+p_{\alpha}(x,y,\varepsilon),\\
\dot{y}&=\bar{Q}_{\bar{\beta}}(x,y)+q_{\beta}(x,y,\varepsilon)
\end{split}
\end{equation}
with
\begin{equation}\label{eq2.04.1}
\begin{split}
p_{\alpha}(x,y,\varepsilon)=\sum_{j=1}^{k}\varepsilon^j\tilde{p}_j(x,y),\quad
q_{\beta}(x,y,\varepsilon)=\sum_{j=1}^{k}\varepsilon^j\tilde{q}_j(x,y),\nonumber
\end{split}
\end{equation}
where the polynomials $\tilde{p}_j, \tilde{q}_j$ are of degree at most $n_2$ (usually $n_2\geq n_1\geq2$) in $x$ and $y$ with $\alpha$ and $\beta$ are free parameters appearing as coefficients of $\tilde{p}_j, \tilde{q}_j$, and $\varepsilon$ is a small parameter. Note that by ``free parameters'' we mean that the coefficient of each monomial in $p_\alpha$ and $q_\beta$ is a distinct parameter in $\alpha$ or $\beta$. We are interested in the maximum number of small-amplitude limit cycles of \eqref{eq2.04} for $|\varepsilon|>0$ sufficiently small, which bifurcate at $\varepsilon=0$ from the center of \eqref{eq2.03}.

Usually, the averaging method deals with planar differential systems in the following normal form
\begin{equation}\label{eq2.01}
\begin{split}
\frac{dr}{d\theta}=\sum_{i=0}^k\varepsilon^iF_i(\theta,r)+\varepsilon^{k+1}R(\theta,r,\varepsilon),
\end{split}
\end{equation}
where $F_i: \mathbb{R}\times D\rightarrow\mathbb{R}$ for $i=0,1,\ldots,k$, and $R:\mathbb{R}\times D\times(-\varepsilon_0,\varepsilon_0)\rightarrow\mathbb{R}$ are $\mathcal{C}^{k}$ functions, $2\pi$-periodic in the first variable, being $D$ an open and bounded interval of $(0,\infty)$, and $\varepsilon_0$ is a small parameter. As one of the main hypotheses, it is assumed that the solution $\varphi(\theta,z)$ of the unperturbed differential system, $dr/d\theta=F_0(\theta,r)$, is $2\pi$-periodic in the variable $\theta$ for every initial condition $\varphi(0,z)=z\in D$.

The averaging method consists in defining a collection of functions $f_i: D\rightarrow\mathbb{R}$, called the $i$-th order averaged function, for $i=1,2,\ldots,k$, which control (their simple zeros control), for $\varepsilon$ sufficiently small, the isolated periodic solutions of the differential system \eqref{eq2.01}. In Llibre-Novaes-Teixeira \cite{jdm14} it has been established that
\begin{equation}\label{eq2.01.2}
\begin{split}
f_i(z)=\frac{y_i(2\pi,z)}{i!},
\end{split}
\end{equation}
where $y_i: \mathbb{R}\times D\rightarrow\mathbb{R}$, for $i=1,2,\ldots,k$, is defined recursively by the following integral equation
\begin{equation}\label{eq2.01.3}
\begin{split}
y_i(\theta,z)&=i!\int_0^{\theta}\Bigg[F_i(s,\varphi(s,z))+\sum_{\ell=1}^i\sum_{S_{\ell}}\frac{1}{b_1!b_2!2!^{b_2}\cdots b_{\ell}!\ell!^{b_{\ell}}}\\
&\quad\cdot\partial^LF_{i-\ell}(s,\varphi(s,z))\prod_{j=1}^{\ell}y_j(s,z)^{b_j}\Bigg]ds,
\end{split}
\end{equation}
where $S_{\ell}$ is the set of all $\ell$-tuples of non-negative integers $[b_1,b_2,\ldots,b_{\ell}]$ satisfying $b_1+2b_2+\cdots+\ell b_{\ell}={\ell}$ and $L=b_1+b_2+\cdots+b_{\ell}$. Here, $\partial^LF(\theta,r)$ denotes the Fr\'echet's derivative of order $L$ with respect to the variable $r$.

We remark that, in practical terms, the evaluation of the recurrence \eqref{eq2.01.3} is a computational problem. Recently in \cite{dn17} the Bell polynomials were used to provide a relatively simple alternative formula for the recurrence. In this paper, we will exploit this new formula in our algorithmic approach for solving this problem (see Section \ref{sect3.2}).

Related to the averaging functions \eqref{eq2.01.2} there exist two fundamentally different cases in \eqref{eq2.01}, namely, when $F_0=0$ and when $F_0\neq0$. We see that when $F_0\neq 0$, the formula for $y_i(\theta,z)$ in \eqref{eq2.01.3} requires the solution of a Cauchy problem because $y_i(\theta,z)$ appears on both sides of the equation (see Remark 3 in \cite{jdm14}). The investigation in this paper is restricted to the case where $F_0=0$. In this case, we have $\varphi(\theta,z)=z$ for each $\theta\in\mathbb{R}$. Then the integral in equation \eqref{eq2.01.3} simplifies to
\begin{equation}\label{eq3.3.0}
\begin{split}
y_1(\theta,z)&=\int_0^{\theta}F_1(s,z)ds,\\
y_i(\theta,z)&=i!\int_0^{\theta}\Bigg[F_i(s,z)+\sum_{\ell=1}^{i-1}\sum_{S_{\ell}}\frac{1}{b_1!b_2!2!^{b_2}\cdots b_{\ell}!\ell!^{b_{\ell}}}\\
&\quad\cdot\partial^LF_{i-\ell}(s,z)\prod_{j=1}^{\ell}y_j(s,z)^{b_j}\Bigg]ds.
\end{split}
\end{equation}
The following $k$-th order averaging theorem gives a criterion for the existence of limit cycles. Its proof can be found in Section 2 of \cite{jjd17}.
\begin{theorem}\cite{jjd17}\label{tt0}
Assume that $f_i\equiv0$ for $i=1,2,\ldots,j-1$ and $f_j\neq0$ with $j\in\{1,2,\ldots,k\}$. If there exists $\bar{r}\in D$ such that $f_j(\bar{r})=0$ and $f'_j(\bar{r})\neq0$, then for $|\varepsilon|>0$ sufficiently small, there exists a $2\pi$-periodic solution $r(\theta,\varepsilon)$ of \eqref{eq2.01} such that $r(0,\varepsilon)\rightarrow \bar{r}$ when $\varepsilon\rightarrow0$.
\end{theorem}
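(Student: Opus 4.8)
The plan is to reduce the existence of a $2\pi$-periodic solution to a root-finding problem for a displacement map, and then to apply the Implicit Function Theorem. Denote by $r(\theta,z,\varepsilon)$ the solution of \eqref{eq2.01} with initial condition $r(0,z,\varepsilon)=z$, and define the displacement function
\begin{equation*}
d(z,\varepsilon):=r(2\pi,z,\varepsilon)-\varphi(2\pi,z)=r(2\pi,z,\varepsilon)-z,
\end{equation*}
where the last equality uses the standing hypothesis that $\varphi(\theta,z)$ is $2\pi$-periodic, so $\varphi(2\pi,z)=\varphi(0,z)=z$. A zero of $d(\cdot,\varepsilon)$ at some $z=z(\varepsilon)$ forces $r(2\pi,z(\varepsilon),\varepsilon)=z(\varepsilon)$, which by the $2\pi$-periodicity of the vector field in $\theta$ yields a $2\pi$-periodic solution; hence it suffices to produce a branch of zeros $z(\varepsilon)$ with $z(\varepsilon)\to\bar{r}$ as $\varepsilon\to0$.

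First I would establish a finite Taylor expansion of the solution in the small parameter. Since the right-hand side of \eqref{eq2.01} is $\mathcal{C}^{k}$ in $(\theta,r)$ and depends polynomially (hence smoothly) on $\varepsilon$, the standard theorems on differentiable dependence of solutions on parameters give
\begin{equation*}
r(\theta,z,\varepsilon)=\varphi(\theta,z)+\sum_{i=1}^{k}\frac{y_i(\theta,z)}{i!}\,\varepsilon^i+O(\varepsilon^{k+1}),
\end{equation*}
uniformly for $\theta\in[0,2\pi]$ and $z$ in a compact subset of $D$. The crucial point, and what I expect to be the main obstacle, is to verify that the Taylor coefficients here coincide with the functions $y_i$ defined by the integral recurrence \eqref{eq2.01.3}. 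This I would do by inserting the ansatz into the integral form $r=z+\int_0^{\theta}\bigl(\sum_i\varepsilon^iF_i+\varepsilon^{k+1}R\bigr)\,ds$, Taylor-expanding each $F_{i-\ell}(s,r(s,z,\varepsilon))$ about $r=\varphi(s,z)$, and collecting the coefficient of $\varepsilon^i$. The higher-order chain rule (Fa\`a di Bruno's formula) produces exactly the sum over the index set $S_{\ell}$ with the multinomial weights $1/(b_1!b_2!2!^{b_2}\cdots b_{\ell}!\ell!^{b_{\ell}})$ and the Fr\'echet derivatives $\partial^LF_{i-\ell}$ appearing in \eqref{eq2.01.3}; matching terms order by order identifies the coefficients with $y_i/i!$, and evaluating at $\theta=2\pi$ recovers the averaged functions $f_i(z)$ of \eqref{eq2.01.2}. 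Controlling the remainder uniformly on the compact domain is where the $\mathcal{C}^{k}$ regularity hypothesis on $R$ is used.

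With this expansion in hand, the hypotheses $f_1\equiv\cdots\equiv f_{j-1}\equiv0$ give
\begin{equation*}
d(z,\varepsilon)=\varepsilon^j f_j(z)+O(\varepsilon^{j+1})=\varepsilon^j g(z,\varepsilon),
\end{equation*}
where $g(z,\varepsilon):=f_j(z)+\varepsilon(\cdots)$ extends to a $\mathcal{C}^1$ function at $\varepsilon=0$ with $g(z,0)=f_j(z)$. For $\varepsilon\neq0$, finding zeros of $d$ is then equivalent to solving $g(z,\varepsilon)=0$. By assumption $g(\bar{r},0)=f_j(\bar{r})=0$ and $\partial_z g(\bar{r},0)=f'_j(\bar{r})\neq0$, so the Implicit Function Theorem yields a $\mathcal{C}^1$ branch $z=z(\varepsilon)$, defined for $|\varepsilon|$ sufficiently small, with $z(0)=\bar{r}$ and $g(z(\varepsilon),\varepsilon)=0$. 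Setting $r(\theta,\varepsilon):=r(\theta,z(\varepsilon),\varepsilon)$ then gives the desired $2\pi$-periodic solution with $r(0,\varepsilon)=z(\varepsilon)\to\bar{r}$ as $\varepsilon\to0$. I expect the combinatorial bookkeeping in the Fa\`a di Bruno step to be the most delicate part of the argument; everything downstream of the expansion is a routine application of the Implicit Function Theorem.
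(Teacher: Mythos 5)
The paper itself does not prove Theorem \ref{tt0}: it defers the proof to Section 2 of \cite{jjd17}. Your proposal reconstructs essentially that standard argument: expand the displacement map $d(z,\varepsilon)=r(2\pi,z,\varepsilon)-z$ in powers of $\varepsilon$, identify the Taylor coefficients with the functions $y_i$ of \eqref{eq2.01.3} via Fa\`a di Bruno's formula (this identification is precisely the content of the key lemma in \cite{jdm14,jjd17}), and then solve $d(z,\varepsilon)=0$ for a branch $z(\varepsilon)\to\bar{r}$. The one genuine difference is the last step: the cited proofs use Brouwer degree theory rather than the Implicit Function Theorem, because their hypothesis is the weaker condition that the Brouwer degree of $f_j$ near $\bar{r}$ is nonzero; under the simple-zero hypothesis $f_j(\bar{r})=0$, $f'_j(\bar{r})\neq0$ stated here, that degree is $\pm1$, so your IFT route is a legitimate specialization, and it buys a locally unique, $\mathcal{C}^1$ branch $z(\varepsilon)$ rather than mere existence.

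Two cautions on the details. First, the right-hand side of \eqref{eq2.01} is not polynomial in $\varepsilon$ (the remainder $R$ depends on $\varepsilon$ and is only $\mathcal{C}^k$), so the Taylor remainder you may claim is $o(\varepsilon^{k})$ rather than $O(\varepsilon^{k+1})$; this is harmless, since $j\leq k$ and you only need $d(z,\varepsilon)=\varepsilon^{j}\bigl(f_j(z)+\rho(z,\varepsilon)\bigr)$ with $\rho\to0$ uniformly on compact subsets of $D$. Second, to apply the IFT to $g(z,\varepsilon)=d(z,\varepsilon)/\varepsilon^{j}$ you need $\partial_z g$ to be continuous up to $\varepsilon=0$, i.e., you must also control $\partial_z\rho$, which requires running the same expansion for $\partial_z r(2\pi,z,\varepsilon)$ (using $\mathcal{C}^k$ dependence of solutions on initial conditions); alternatively, since $D$ is one-dimensional, a sign-change (intermediate value) argument at $\bar{r}\pm\delta$ avoids this entirely and is closer in spirit to the degree argument of \cite{jjd17}. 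With either repair, your proof is correct.
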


We remark that in order to analyze the Hopf bifurcation for system \eqref{eq2.04}, applying Theorem \ref{tt0}, we introduce a small parameter $\varepsilon$ doing the change of coordinates $x=\varepsilon X$, $y=\varepsilon Y$. After that we perform the polar change of coordinates $X=r\cos\theta$, $Y=r\sin\theta$, and by doing a Taylor expansion truncated at $k$-th order in $\varepsilon$ we obtain an expression for $dr/d\theta$ similar to \eqref{eq2.01} up to $k$-th order in $\varepsilon$. In doing so, the variable $\theta$ appears through sines and cosines, the differential equation in the form $dr/d\theta$ is $2\pi$-periodic. It suffices to take $D=\{r: 0<r<r_0\}$ with $r_0>0$ is arbitrary, since we restrict $F_0=0$, the unperturbed system has periodic solutions passing through the points $(0,r)$ with $0<r<r_0$.

In general, it is not an easy thing to determine the exact number of simple zeros of the averaged functions \eqref{eq2.01.2}, since the averaged functions may be too complicated, such as including square root functions, logarithmic functions, and the elliptic integrals. In the literature there is an abundance of papers dealing with zeros of the averaged functions (see for instance \cite{HJJ163,nt17,bh19} and references therein). The techniques and arguments to tackle this kind of problem are usually very long and technical.

As a summary of this section, we remark that, using the expressions of the averaged functions, one can estimate the size of bifurcated limit cycles. In fact we know that if the averaged function $f_j=0$ for $j=1,\ldots,k-1$ and $f_k\neq0$, and $\bar{r}$ is a simple zero of $f_k$, then by Theorem \ref{tt0} there is a limit cycle $r(\theta,\varepsilon)$ of the differential system \eqref{eq2.01} such that $r(0,\varepsilon)=\bar{r}+\mathcal{O}(\varepsilon)$. Then, going back through the changes of variables we have for the differential system $(\dot{X},\dot{Y})$ the limit cycle $(X(t,\varepsilon),Y(t,\varepsilon))=(\bar{r}\cos\theta,\bar{r}\sin\theta)+\mathcal{O}(\varepsilon)$. Now due to the scaling $x=\varepsilon X, y=\varepsilon Y$ the limit cycles that we find for the differential system \eqref{eq2.01} coming from our system \eqref{eq2.04}, are in fact limit cycles of the form $(x(t,\varepsilon),y(t,\varepsilon))=\varepsilon(\bar{r}\cos\theta,\bar{r}\sin\theta)+\mathcal{O}(\varepsilon^2)$ for system \eqref{eq2.04}, which tends to the origin from the origin, i.e., are limit cycles coming by a Hopf bifurcation, for more details on these kind of bifurcations see \cite{yk04} for instance.

\section{Main Results}\label{sect2.5}
Denote the exact upper bound for the number of positive simple zeros of the $i$-th order averaged function $f_i(r)$ associated to system \eqref{eq2.04} by $H_i(n_1,n_2)$ for $i=1,\ldots,k$. Applying Theorem \ref{tt0}, we know that the maximal number of small-amplitude limit cycles of \eqref{eq2.04} is $H_i(n_1,n_2)$ and this number can be reached. In this work, we attempt to prove upper bounds on the number of zeros of the $k$-th order averaged function. Our main theorem is the following:
\begin{theorem}\label{tt1}
Assume that $F_0=0$ in the normal form \eqref{eq2.01} associated to the system \eqref{eq2.04}, then there exist a non-negative integer $\nu_i\leq i-1$ and a polynomial function $\bar{f}_i(r)=\sum_{j=0}^{N_i}c_jr^j$ with $N_i\leq in_2$, such that $r^{\nu_i}f_i(r)=\bar{f}_i(r)$ for $i=1,\ldots,k$, where the coefficients $c_j\in\mathbb{Q}[\pi]$ with degree no more than $i$ in $\pi$.
\end{theorem}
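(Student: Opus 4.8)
The plan is to analyze the structure of the averaged functions $f_i(r)$ by tracing how the trigonometric polynomial structure of the normal form \eqref{eq2.01} propagates through the recurrence \eqref{eq3.3.0}. After the scaling $x=\varepsilon X$, $y=\varepsilon Y$ and the polar change $X=r\cos\theta$, $Y=r\sin\theta$, and expanding $dr/d\theta$ as a Taylor series in $\varepsilon$, each coefficient $F_i(\theta,r)$ becomes a finite sum of terms of the form $r^m\,(\text{trigonometric monomial in }\cos\theta,\sin\theta)$. The key preliminary observation I would establish is a precise bound: since $\tilde{p}_j,\tilde{q}_j$ have degree at most $n_2$ and the $i$-th order term collects products accumulating total degree, the $r$-degree appearing in $F_i(\theta,r)$ is bounded by roughly $i n_2$, and the associated trigonometric degree is similarly controlled. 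I would formalize this as an inductive invariant on $F_i$.

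Next I would carry the invariant through the recurrence. The first step is the base case $y_1(\theta,z)=\int_0^\theta F_1(s,z)\,ds$: integrating a polynomial in $r$ times a trigonometric polynomial in $s$ over $[0,\theta]$ yields a polynomial in $r$ whose coefficients are trigonometric polynomials in $\theta$ plus possibly a term linear in $\theta$ (coming from the constant Fourier mode). Evaluating at $\theta=2\pi$ kills the trigonometric part and isolates the average, which explains both the factor $\pi$ (from integrating the constant mode over a full period) and why $f_1(r)$ is a genuine polynomial in $r$ with coefficients in $\mathbb{Q}[\pi]$ of degree at most $1$ in $\pi$. The inductive step for $y_i$ is the heart of the argument: the integrand in \eqref{eq3.3.0} is built from $\partial^L F_{i-\ell}$ (which lowers $r$-degree by $L$ but keeps the trigonometric-polynomial form) multiplied by products $\prod_j y_j(s,z)^{b_j}$. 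I would show that each $y_j(s,z)$ is itself a polynomial in $r$ whose coefficients are (trig polynomial in $s$) $+$ ($s$ times a polynomial), so that products and integration preserve the class ``polynomial in $r$ with coefficients that are trigonometric-polynomial-plus-polynomial-in-$\theta$,'' with the degree in the auxiliary variable $\pi$ incrementing by at most one at each averaging/integration stage — giving the degree bound $i$ in $\pi$.

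The degree bound $N_i\le i n_2$ would follow by tracking $r$-degrees through the same induction: differentiation $\partial^L$ subtracts $L$, each factor $y_j$ contributes degree at most $j n_2$, and the constraint $b_1+2b_2+\cdots+\ell b_\ell=\ell$ together with $L=b_1+\cdots+b_\ell$ forces the total $r$-degree of each summand to be at most $i n_2$, after which integration in $\theta$ does not raise the $r$-degree. To extract the factor $r^{\nu_i}$ with $\nu_i\le i-1$, I would observe that the lowest-degree monomials in $r$ are forced by the structure of the expansion: because the unperturbed part is linear ($-y$, $x$) and the perturbations start at order $\varepsilon$, the terms surviving after dividing out the common factor $r^{\nu_i}$ begin at a controlled power, with $\nu_i$ bounded by $i-1$ coming from the minimal degree accumulated over $i$ stages of the recurrence.

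I expect the main obstacle to be making the inductive invariant on $y_i(\theta,z)$ simultaneously precise enough to control \emph{both} the $\pi$-degree and the $r$-degree while remaining closed under the combinatorial products $\prod_{j=1}^{\ell} y_j^{b_j}$ appearing in \eqref{eq3.3.0}. The difficulty is that the Fréchet derivatives $\partial^L F_{i-\ell}$ and the nested products interact, so one must carefully track how the ``secular'' terms (those linear in $\theta$, which are what produce powers of $\pi$ upon evaluation at $2\pi$) multiply together; a naive bound would let the $\pi$-degree grow faster than $i$. The clean way around this is to separate each $y_j(\theta,z)$ into a purely periodic part and a polynomial-in-$\theta$ part and prove a sharp lemma on how these two parts combine under products and integration, ensuring the secular degree grows by exactly one per order; establishing that lemma with the correct constants is where the real work lies.
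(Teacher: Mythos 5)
Your overall plan (induct through the recurrence \eqref{eq3.3.0}, tracking the $r$-degree and the secular degree simultaneously) is the same skeleton as the paper's proof, but it rests on a structural assumption that is false and that the theorem statement itself is warning you about. You treat each $F_i(\theta,r)$ and each $y_j(\theta,z)$ as a \emph{polynomial} in $r$. They are not: the normal form is obtained by expanding
\[
\frac{dr}{d\theta}=\frac{B_1(r,C,S)\varepsilon+\cdots}{r+A_1(r,C,S)\varepsilon+\cdots}
\]
as a geometric series in $\varepsilon$, and every factor $A_h/r$ injects a negative power of $r$; since the perturbations contain constant terms $p_0^j,q_0^j$ for $j\geq 2$, the $A_h$ have nonzero constant terms in $r$, so these $1/r$'s do not cancel. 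The correct invariant is that $F_i$ and $y_i$ are Laurent polynomials, $F_i=\bar F_i(r,C,S)/r^{\,i-1}$ and $y_i=\bar y_i(r,C,S)/r^{\,i-1}$ with numerators of degree at most $in_2$; consequently $f_i\in\mathbb{Q}[\pi][r,r^{-1}]$. This is exactly why the theorem asserts $r^{\nu_i}f_i(r)=\bar f_i(r)$: the factor $r^{\nu_i}$ \emph{clears a pole} of order at most $i-1$, whereas you read it as a common monomial factor of a polynomial to be divided out --- the opposite direction. Your bookkeeping ``differentiation $\partial^L$ subtracts $L$ from the degree'' is the polynomial version and breaks once poles are present: for these rational functions, $\partial^L$ \emph{raises the pole order} by $L$ (the paper isolates this as a lemma: $\left(g_n(x)/x^{k-1}\right)^{(p)}=\bar g_n(x)/x^{k+p-1}$ with $\deg\bar g_n\leq n$). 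The heart of the bound $\nu_i\leq i-1$, $N_i\leq in_2$ is then the cancellation in $\sum_{S_\ell}\partial^LF_{i-\ell}\prod_j y_j^{b_j}$: the denominator $r^{\,i-\ell-1+L}$ from $\partial^LF_{i-\ell}$ and the denominators $r^{\,\sum_j(j-1)b_j}=r^{\,\ell-L}$ from the product combine, via $L=\sum_j b_j$ and $\sum_j jb_j=\ell$, to exactly $r^{\,i-1}$. Without this step your induction does not close, and a naive treatment would let the pole order (hence $\nu_i$) blow up.

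A second, smaller gap: your proposed repair for the $\pi$-degree --- split each $y_j$ into ``a purely periodic part plus a polynomial-in-$\theta$ part'' --- is not a class closed under the operations of the recurrence. Already $\int_0^\theta s\cos s\,ds=\theta\sin\theta+\cos\theta-1$ produces a mixed secular term $\theta\sin\theta$, which is neither periodic nor polynomial in $\theta$, and such mixed monomials proliferate under the products $\prod_j y_j^{b_j}$. The workable class is the linear span of monomials $\theta^{\lambda_0}r^{\lambda_1}\sin^{\lambda_2}\theta\cos^{\lambda_3}\theta$ (with $\lambda_1\in\mathbb{Z}$), whose closure under $\int_0^\theta$ is proved by integration-by-parts recursions for $M_{i,j,k}=\int_0^\theta s^i\sin^js\cos^ks\,ds$. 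The degree-$i$ bound in $\pi$ then follows not from ``the secular degree grows by exactly one per order'' but from the same weighted constraint as above: $y_j$ has $\theta$-degree at most $j$, so $\prod_j y_j^{b_j}$ has $\theta$-degree at most $\sum_j jb_j=\ell\leq i-1$, and the single outer integration adds at most one before evaluation at $\theta=2\pi$.
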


A detailed proof of it can be found in Appendix \ref{A}. This result is the first work that deals with the bifurcation of limit cycles of system \eqref{eq2.04} in the general class of perturbations (see \cite{jj15,jm16} for a few results on some systems of special form). This theorem tells us that the maximum number of small-amplitude limit cycles of \eqref{eq2.04}, which bifurcate from the center of \eqref{eq2.03} is always finite ( $H_k(n_1,n_2)\leq N_k$). But, for a given system \eqref{eq2.04}, how can we determine the exact value of $N_i$ for $i=1,\ldots,k$? In this paper, we provide an algorithmic approach to the solution (see Algorithm 3 in Section \ref{sect3.2}).

Applying Theorems \ref{tt0} and \ref{tt1}, we obtain the Theorem \ref{tt2} on $f_k(r)$. We first introduce some notations based on Theorems \ref{tt0} and \ref{tt1} before we state this result. Let $R^*$ be the real polynomial ring $\mathbb{Q}[\bar{\alpha},\bar{\beta},\alpha,\beta]$. Then for each $\bar{f}_i(r)\in R^*[\pi][r]$, we define $\mbox{coeffs}(\bar{f}_i;r,\pi)=\{c_{j_{1},{j_2}}: j_1=0,\ldots,N_i; j_2=0,\ldots,i\}$, where
\[\bar{f}_i(r)=\sum_{j_2=0}^i\sum_{j_1=0}^{N_i}c_{j_{1},{j_2}}r^{j_1}\pi^{j_2}.\]
Then $\Sigma_k=\cup_{i=1}^{k-1}\mbox{coeffs}(f_i;r,\pi)=0\subseteq R^*$. Now taking the above notations into account and applying Theorems \ref{tt0} and \ref{tt1}, we obtain the following theorem on $f_k(r)$.
\begin{theorem}\label{tt2}
Assume that $F_0=0$ in the normal form \eqref{eq2.01} associated to the system \eqref{eq2.04}. Then there exist non-negative integers $\tilde{\nu}_k\leq k-1$, $\tilde{N}_k\leq kn_2$ and a polynomial function $\tilde{f}_k(r)=\sum_{j=0}^{N_k}\tilde{c}_jr^j$, such that $f_k(r)$ in \eqref{eq2.01.2} has the form $r^{\tilde{\nu}_k}f_k(r)=\tilde{f}_k(r)$ and the coefficients $$\tilde{c}_j\in\mathbb{Q}[\pi,\bar{\alpha},\bar{\beta},\alpha,\beta]/\Sigma_k$$
with degree no more than $k$ in $\pi$.
\end{theorem}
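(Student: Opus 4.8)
The plan is to derive Theorem \ref{tt2} as a refinement of Theorem \ref{tt1} specialized to the top order $i=k$, the only new ingredient being the bookkeeping that reduces the coefficients of $f_k$ modulo the ideal $\Sigma_k$ that encodes the identical vanishing of all lower-order averaged functions. First I would invoke Theorem \ref{tt1} directly at $i=k$. This already supplies a non-negative integer $\nu_k\leq k-1$, a bound $N_k\leq kn_2$, and a polynomial $\bar{f}_k(r)=\sum_{j=0}^{N_k}c_jr^j$ with $r^{\nu_k}f_k(r)=\bar{f}_k(r)$, whose coefficients $c_j$ lie in $R^*[\pi]$ and have degree at most $k$ in $\pi$. (In the proof of Theorem \ref{tt1} the coefficients are genuinely polynomials in $\pi$ \emph{and} in the parameters $\bar{\alpha},\bar{\beta},\alpha,\beta$; the restriction to $\mathbb{Q}[\pi]$ in the statement is simply the specialization obtained once parameter values are fixed.) Setting $\tilde{\nu}_k:=\nu_k$ and $\tilde{N}_k:=N_k$ immediately secures the two claimed inequalities $\tilde{\nu}_k\leq k-1$ and $\tilde{N}_k\leq kn_2$.

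Next I would perform the reduction modulo $\Sigma_k$. Recall that $\Sigma_k=\cup_{i=1}^{k-1}\mbox{coeffs}(f_i;r,\pi)$ collects all $R^*$-coefficients of $f_1,\dots,f_{k-1}$ regarded as polynomials in the two formal indeterminates $r$ and $\pi$; imposing $\Sigma_k=0$ is precisely the algebraic surrogate for the analytic condition $f_1\equiv\cdots\equiv f_{k-1}\equiv 0$. Since this is exactly the hypothesis under which Theorem \ref{tt0} is applied at order $k$, it is legitimate to regard the parameters as constrained to the variety $V(\Sigma_k)$ and hence to pass to the quotient ring $\mathbb{Q}[\pi,\bar{\alpha},\bar{\beta},\alpha,\beta]/\Sigma_k$. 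I would then define $\tilde{c}_j$ to be the image of $c_j$ under the quotient map and set $\tilde{f}_k(r)=\sum_{j=0}^{N_k}\tilde{c}_jr^j$, so that the identity $r^{\tilde{\nu}_k}f_k(r)=\tilde{f}_k(r)$ holds in the quotient (equivalently, on $V(\Sigma_k)$).

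The one point requiring genuine care is the claim that $\deg_{\pi}\tilde{c}_j\leq k$ survives the quotient. The decisive observation is that $\Sigma_k\subseteq R^*=\mathbb{Q}[\bar{\alpha},\bar{\beta},\alpha,\beta]$ is entirely free of $\pi$. Writing $c_j=\sum_{m=0}^{k}a_{j,m}\pi^m$ with each $a_{j,m}\in R^*$, the reduction of $c_j$ modulo $\Sigma_k$ is carried out coefficient-wise in $\pi$, namely $\tilde{c}_j=\sum_{m=0}^{k}(a_{j,m}\bmod\Sigma_k)\,\pi^m$, because the generators of $\Sigma_k$ never introduce a factor of $\pi$. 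Hence the $\pi$-degree of each $\tilde{c}_j$ cannot exceed that of $c_j$, which is at most $k$, giving the asserted bound. This coefficient-wise argument avoids any choice of monomial order and makes the degree preservation transparent.

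The main obstacle is \emph{not} located in Theorem \ref{tt2} itself, which is corollary-level once Theorem \ref{tt1} is available; the substantive work, namely the degree estimates $\nu_i\leq i-1$, $N_i\leq in_2$ and $\deg_{\pi}\leq i$, is established in Appendix \ref{A} by induction on $i$ using the simplified recurrence \eqref{eq3.3.0} and the closed form \eqref{eq2.01.2}. Within the present proof the only delicate issue is conceptual rather than computational: one must confirm that setting all $(r,\pi)$-coefficients of $f_1,\dots,f_{k-1}$ to zero is the correct algebraic encoding of $f_i\equiv 0$ (here the formal treatment of $\pi$ as an indeterminate, transcendental over the parameter ring, is what makes the vanishing of individual coefficients both sufficient and natural), and that the $\pi$-freeness of $\Sigma_k$ indeed guarantees the preservation of the $\pi$-degree bound under quotienting, as argued above.
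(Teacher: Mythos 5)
Your proposal is correct and follows the same route as the paper: the paper's own proof is the one-line observation that the conclusion ``follows directly from the conditions $\bar{f}_1=\bar{f}_2=\cdots=\bar{f}_{k-1}=0$,'' i.e., Theorem~\ref{tt2} is Theorem~\ref{tt1} at $i=k$ read modulo $\Sigma_k$, exactly as you argue. Your elaboration---in particular the observation that $\Sigma_k\subseteq R^*$ is free of $\pi$, so reduction modulo $\Sigma_k$ acts coefficient-wise in $\pi$ and preserves the degree bound $\deg_\pi\tilde{c}_j\leq k$---is a faithful (and more careful) spelling-out of what the paper leaves implicit.
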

\begin{proof}
The conclusion follows directly from the conditions $\bar{f}_1=\bar{f}_2=\cdots=\bar{f}_{k-1}=0$.
\end{proof}

We remark that, the study of the number of zeros of $f_k(r)$ is currently not-algorithmic. Below we give our analysis on this.

Let $\bar{N}=|\bar{\alpha}|+|\bar{\beta}|+|\alpha|+|\beta|$ be the number of parameters in system \eqref{eq2.04}, and $V(\Sigma_k)\subseteq\mathbb{R}^{\bar{N}}$ is the variety defined by $\Sigma_k$. For any point $p^*\in V(\Sigma_k)$, let $\bar{f}_k(r;p^*)\in\mathbb{R}[r]$ be the real polynomial when the parameter are instantiated by $p^*$. Finally let $\#(p^*)$ denote the maximal number of zeros (counted with multiplicity) of $\bar{f}(r;p^*)$. It follows that $H_k(n_1,n_2)\leq\mbox{max}\{\#(p^*: p\in V(\Sigma_k))\}$.

In order to study the number of zeros of function $f_k(r)$, according to our Theorem \ref{tt1}, it suffices to consider the number of zeros of a polynomial function. Here we provide the \textit{Descartes theorem} (see \cite{in65}) to obtain the upper bound of the number of zeros for the polynomial functions.

\begin{lemma}\label{lem0}
(Descartes theorem). Consider the real polynomial $m(x)=a_{s_1}x^{s_1}+a_{s_2}x^{s_2}+\cdots+a_{s_m}x^{s_m}$ with $0=s_1<s_2<\cdots<s_m$ and $a_{s_j}\neq0$ real constants for $j\in\{1,2,\ldots,m\}$. When $a_{s_j}a_{s_{j+1}}<0$, we say that $a_{s_j}$ and $a_{s_{j+1}}$ have a variation of sign. If the number of variations of signs is $m^*$, then $m(x)$ has at most $m^*$ positive real roots. Moreover, it is always possible to choose the coefficients of $m(x)$ in such a way that $m(x)$ has exactly $m-1$ positive real roots.
\end{lemma}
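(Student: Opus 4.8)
The plan is to prove the two assertions separately: first the upper bound $Z^+(m)\le m^*$ on the number of positive roots counted with multiplicity (here $Z^+$ denotes that count), and then the sharpness claim that $m-1$ positive roots is attainable. Throughout, $V(\cdot)$ denotes the number of sign variations of a coefficient sequence, so that $m^*=V(m)$.

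For the upper bound I would induct on the number of terms $m$. Since dividing by $x^{s_1}$ alters neither the positive roots nor the sign sequence $(a_{s_1},\ldots,a_{s_m})$, I may assume $s_1=0$, so that $m(0)=a_{s_1}\neq0$. The base case $m=1$ is immediate: a single monomial has no positive root and $m^*=0$. For the inductive step I pass to the derivative $m'(x)=\sum_{j=2}^{m}s_j a_{s_j}x^{s_j-1}$, which has $m-1$ terms whose coefficient signs equal $(\operatorname{sign}a_{s_2},\ldots,\operatorname{sign}a_{s_m})$ because each $s_j>0$. Hence $V(m')=m^*-1$ when $a_{s_1}a_{s_2}<0$ and $V(m')=m^*$ otherwise. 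Rolle's theorem together with the fact that a root of multiplicity $\mu$ of $m$ is a root of multiplicity $\mu-1$ of $m'$ yields the general estimate $Z^+(m')\ge Z^+(m)-1$.

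The main obstacle is the case $a_{s_1}a_{s_2}>0$, where $V(m')=m^*$ and the crude inequality $Z^+(m)\le Z^+(m')+1\le V(m')+1=m^*+1$ is off by one. Here I would sharpen the Rolle count to $Z^+(m')\ge Z^+(m)$. The key observation is that $\operatorname{sign}m(0^+)=\operatorname{sign}a_{s_1}$ while $\operatorname{sign}m'(0^+)=\operatorname{sign}a_{s_2}$, so when $a_{s_1}a_{s_2}>0$ the functions $m$ and $m'$ share a sign just to the right of $0$. Letting $0<r_1<\cdots<r_k$ be the distinct positive roots of $m$ with multiplicities $\mu_i$, the mean value theorem on $(0,r_1)$ produces a point where $m'$ has sign opposite to $m(0^+)$, so by the intermediate value theorem $m'$ acquires an extra zero in $(0,r_1)$, disjoint from the $\sum_i(\mu_i-1)$ zeros at the $r_i$ and the $k-1$ Rolle zeros in the gaps. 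Summing gives $Z^+(m')\ge(Z^+(m)-k)+(k-1)+1=Z^+(m)$, and the induction hypothesis applied to $m'$ closes both cases: $Z^+(m)\le Z^+(m')\le V(m')=m^*$.

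For the sharpness statement I would exploit that the monomials $x^{s_1},\ldots,x^{s_m}$ form a Descartes (Chebyshev) system on $(0,\infty)$, i.e.\ the generalized Vandermonde determinants built from values at increasing positive nodes are nonzero of a predictable sign. Fixing arbitrary nodes $0<\xi_1<\cdots<\xi_{m-1}$, the linear conditions $m(\xi_i)=0$ have a one-dimensional solution space; the Descartes-system property forces the coefficients $a_{s_j}$ to be nonzero and strictly alternating in sign, so that $m^*=m-1$, while the upper bound just proved shows the $\xi_i$ are all of its positive zeros. (In the special case $s_j=j-1$ one may instead take $m(x)=\prod_{i=1}^{m-1}(x-\xi_i)$ and read off the alternating signs from Vieta's formulas.) This realizes exactly $m-1$ positive roots and completes the proof.
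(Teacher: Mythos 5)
Your proof is correct, but there is nothing in the paper to compare it against: the paper states Lemma \ref{lem0} as a classical fact with a pointer to \cite{in65} and gives no proof of its own, so you have supplied the missing argument rather than paralleled one. Your upper-bound half is the standard induction on the number of terms: after normalizing $s_1=0$, differentiation drops the variation count by one exactly when $a_{s_1}a_{s_2}<0$, and in the troublesome case $a_{s_1}a_{s_2}>0$ you correctly repair the lossy Rolle estimate by extracting an extra zero of $m'$ in $(0,r_1)$ from the fact that $m(0^+)$ and $m'(0^+)$ share a sign while the mean value theorem forces $m'$ to take the opposite sign before $r_1$; the only unstated case, $Z^+(m)=0$, is trivial, so this half is complete. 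For the sharpness half, the one external ingredient you invoke --- that the monomials $x^{s_1},\dots,x^{s_m}$ form a Descartes/Chebyshev system on $(0,\infty)$, i.e.\ that generalized Vandermonde determinants at increasing positive nodes are nonzero of constant sign --- is true but deserves care, since its standard proofs themselves rest on zero-counting for monomial sums, so quoting it wholesale flirts with circularity. Fortunately the upper bound you have already proved lets you bypass it entirely: the homogeneous system $m(\xi_i)=0$, $i=1,\dots,m-1$, has a nonzero kernel vector by dimension count alone; the resulting not-identically-zero combination vanishes at the $m-1$ prescribed nodes, and if it had only $T\le m$ nonzero terms your bound would cap its positive roots at $T-1$, forcing $T=m$; then having $m-1$ positive roots but at most $m-1$ sign variations forces strictly alternating coefficients and exactly $m-1$ simple positive roots. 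With that small substitution your argument becomes fully self-contained.
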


\section{Algorithms for the $k$-th Order Averaging Theorem}\label{sect4}

In this section we will provide an algorithmic approach to revisit the averaging method. According to the averaging method described in Section 2, it is necessary to take the following steps to study the bifurcation of limit cycles for system \eqref{eq2.04}.

{\bf STEP 1}. Write the perturbed system \eqref{eq2.04} in the normal form of averaging \eqref{eq2.01} up to $k$-th order in $\varepsilon$.

{\bf STEP 2}. (i) Compute the exact formula for the $k$-th order integral function $y_k(\theta,z)$ in \eqref{eq3.3.0}. (ii) Derive the symbolic expression of the $k$-th order averaged function $f_k(z)$ by \eqref{eq2.01.2}.

{\bf STEP 3}. Determine the exact upper bound for number of positive simple zeros of $f_k(z)$.

In the following subsections we will present algorithms to implement the first two steps. We use ``Maple-like'' pseudo-code, based on our Maple implementation. Using these algorithms we reduce the problem of studying the number of limit cycles of system \eqref{eq2.04} to the problem of detecting {\bf STEP 3}.

\subsection{Algorithm for {\bf STEP 1}}\label{sect3.1}
In this subsection we will devise an efficient algorithm which can be used to transform system \eqref{eq2.04} into the form \eqref{eq2.01}. Our algorithm can derive \eqref{eq2.01} at any order in $\varepsilon$.

Now refer to \eqref{eq2.04}, making the change of variables $x=\varepsilon\cdot r\cdot C$ and $y=\varepsilon\cdot r\cdot S$ with $C=\cos\theta$ and $S=\sin\theta$, we present the algorithm {\bf Normalize} below based on the above analysis.
\begin{small}
\begin{algorithm}[H]
\caption{{\bf Normalize}$(\bar{P}_{\bar{\alpha}},\bar{Q}_{\bar{\beta}},p_{\mathbb{\alpha}},q_{\mathbb{\beta}},k)$}
\hspace*{0.02in} {\bf Input:}
a perturbed system \eqref{eq2.04} with a order $k\geq1$\\
\hspace*{0.02in} {\bf Output:}
an expression for $dr/d\theta$ similar to \eqref{eq2.01} up to $k$-th order in $\varepsilon$
\begin{algorithmic}[1]
\State $dX:=\mbox{normal}(\mbox{subs}(x=\varepsilon X,y=\varepsilon Y,\bar{P}_{\bar{\alpha}}+p_{\mathbb{\alpha}})/\varepsilon)$;
\State $dY:=\mbox{normal}(\mbox{subs}(x=\varepsilon X,y=\varepsilon Y,\bar{Q}_{\bar{\beta}}+q_{\mathbb{\beta}})/\varepsilon)$;
\State $R0:=\mbox{normal}\left(\mbox{subs}\left(X=r\cdot C,Y=r\cdot S,\frac{r\cdot(C\cdot dX+S\cdot dY)}{C\cdot dY-S\cdot dX}\right)\right)$;
\State $T:=\mbox{taylor}(R0,\varepsilon=0,k+1)$;
\State $H:=\mbox{expand}(\mbox{convert}\left(T,\mbox{polynom}\right))$;
\If{ $\mbox{coeff}(\varepsilon\cdot H,\varepsilon)=0$ }
\For{$i$ {\bf from} 1 {\bf to} $k$}
\State $f_i:=\mbox{coeff}(H,\varepsilon^i)$;
\State $F_{i,1}:=\mbox{prem}\left(\mbox{numer}(f_i),C^2+S^2-1,C\right)$;
\State
$F_{i,2}:=\mbox{prem}\left(\mbox{denom}(f_i),C^2+S^2-1,C\right)$;
\State
$F_i:=\mbox{normal}({F_{i,1}}/{F_{i,2}})$;
\EndFor
\EndIf
\State $dr/d\theta:=\mbox{subs}(C=\cos\theta,S=\sin\theta,\sum_{j=1}^kF_j\varepsilon^j$);
\State \Return $dr/d\theta$;
\end{algorithmic}
\end{algorithm}
\end{small}
The {\bf if} hypothesis in line 6 is to make sure that $F_0=0$. In line 9 the function $\mbox{prem}(a,b,x)$ is the pseudo-remainder of $a$ with respect to $b$ in the variable $x$. The following lemma is obtained directly by the property of the pseudo-remainder.

\begin{lemma}
The expressions $F_{i,j}$ with $j\in\{1,2\}$ in the algorithm {\bf Normalize} have the following properties:
\[F_{i,j}=f_{i,j}(r,S)C+g_{i,j}(r,S),\]
where $f_{i,j}$ and $g_{i,j}$ are polynomials in the variables $r$ and $S$.
\end{lemma}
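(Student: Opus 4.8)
The plan is to read the claim off the degree bound that is built into the pseudo-remainder operation. Recall that for polynomials $a$ and $b$ in a main variable $x$ over a commutative ring, with $b$ of degree $d$ in $x$ and leading coefficient $\ell$, the pseudo-remainder $\mathrm{prem}(a,b,x)$ is the polynomial $\rho$ satisfying $\ell^{\,m} a = q\,b + \rho$ with $\deg_x(\rho) < d$ for a suitable exponent $m$. The key observation is that in lines 9--10 the divisor is $b = C^2 + S^2 - 1$, which has degree $d = 2$ in the main variable $C$ and is \emph{monic} in $C$ (its leading coefficient is $1$). Hence the pseudo-remainder coincides with the ordinary Euclidean remainder, and in particular $\deg_C(F_{i,j}) < 2$.

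First I would verify that the two polynomials fed into $\mathrm{prem}$, namely $\mathrm{numer}(f_i)$ and $\mathrm{denom}(f_i)$, lie in $R^*[r,C,S]$, with $C$ and $S$ treated as independent indeterminates. This is a matter of tracing the substitutions: lines 1--2 replace $x,y$ by $\varepsilon X,\varepsilon Y$ and divide by $\varepsilon$, so that $dX,dY$ are polynomials in $\varepsilon,X,Y$ with coefficients in $R^*$ (each surviving term carries at least one factor of $\varepsilon$, since the lowest-order contributions are $-\varepsilon Y$ and $\varepsilon X$). Line 3 then substitutes $X=rC$, $Y=rS$ and forms a rational expression whose numerator and denominator are polynomials in $\varepsilon,r,C,S$. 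The Taylor expansion in $\varepsilon$ on line 4, its truncation on line 5, and the extraction of the coefficient of $\varepsilon^i$ on line 8 therefore produce $f_i$ as a rational function whose numerator and denominator lie in $R^*[r,C,S]$. Crucially, $\theta$ does not enter at this stage: the trigonometric substitution $C=\cos\theta$, $S=\sin\theta$ is performed only on line 13, after the $\mathrm{prem}$ calls.

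Finally I would combine the two points. Since $\mathrm{numer}(f_i),\mathrm{denom}(f_i)\in R^*[r,S][C]$ and the divisor $C^2+S^2-1$ is monic in $C$, each pseudo-remainder $F_{i,j}$ again lies in $R^*[r,S][C]$ and has degree at most $1$ in $C$. Writing out its expansion in powers of $C$ gives
\[F_{i,j} = f_{i,j}(r,S)\,C + g_{i,j}(r,S),\]
where $f_{i,j}$ is the coefficient of $C^1$ and $g_{i,j}$ the coefficient of $C^0$, both polynomials in $r$ and $S$ (with coefficients in $R^*$), exactly as asserted. I do not expect a genuine obstacle here: the statement is essentially a restatement of the defining degree bound of the pseudo-remainder, and the only point needing care is the bookkeeping that confirms $C$ and $S$ remain independent indeterminates throughout lines 1--10, so that the degree-in-$C$ bound applies verbatim.
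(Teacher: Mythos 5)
Your proof is correct and follows exactly the paper's reasoning: the paper states only that the lemma ``is obtained directly by the property of the pseudo-remainder,'' and your argument is precisely that property spelled out --- the divisor $C^2+S^2-1$ is monic of degree $2$ in $C$, so the pseudo-remainder equals the Euclidean remainder and has degree at most $1$ in $C$, with coefficients that are polynomials in $r$ and $S$. Your additional bookkeeping (checking that $\mathrm{numer}(f_i)$ and $\mathrm{denom}(f_i)$ are genuine polynomials in $r,C,S$ because the substitution $C=\cos\theta$, $S=\sin\theta$ occurs only after the $\mathrm{prem}$ calls) is a useful detail the paper leaves implicit, but it is the same approach.
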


\subsection{Algorithms for {\bf STEP 2}}\label{sect3.2}
This subsection is devoted to provide effective algorithms to compute the formula and exact expression of the $k$-th order averaged function.

According to \eqref{eq3.3.0}, we should take the following substeps to compute the $k$-th order averaged function of system \eqref{eq2.01}:

{\bf Substep 1}. Compute the exact formula for the $k$-th order integral function $y_k(\theta,z)$.

{\bf Substep 2}. Output the symbolic expression for the $k$-th order averaged function $f_k(r)$ (not simplified by using $f_1\equiv f_2\equiv\cdots\equiv f_{k-1}\equiv0$) for a given differential system \eqref{eq2.04}.

We first recall the partial Bell polynomials which can be used to implement the first substep. For $\ell$ and $m$ positive integers, the Bell polynomials:
\begin{equation}\label{e3.2.1}
\begin{split}
B_{\ell,m}(x_1,\ldots,x_{\ell-m+1})=\sum_{\tilde{S}_{\ell,m}}\frac{\ell!}{b_1!b_2!\cdots b_{\ell-m+1}!}\prod_{j=1}^{\ell-m+1}\left(\frac{x_j}{j!}\right)^{b_j},\nonumber
\end{split}
\end{equation}
where $\tilde{S}_{\ell,m}$ is the set of all $(\ell-m+1)$-tuples of nonnegative integers $[b_1,b_2,\ldots,b_{\ell-m+1}]$ satisfying $b_1+2b_2+\cdots+(\ell-m+1)b_{\ell-m+1}=\ell$, and $b_1+b_2+\cdots+b_{\ell-m+1}=m$.

Then the integral in equation \eqref{eq3.3.0} reads (\cite{dn17}, Theorem 2)
\begin{equation}\label{e3.2.2}
\begin{split}
y_1(\theta,z)&=\int_0^{\theta}F_1(s,z)ds,\\
y_i(\theta,z)&=i!\int_0^{\theta}\Bigg[F_i(s,z)+\sum_{\ell=1}^{i-1}\sum_{m=1}^{\ell}\frac{1}{ \ell!}\partial^mF_{i-\ell}(s,z)\\
&\quad\cdot B_{\ell,m}(y_1(s,z),\ldots,y_{\ell-m+1}(s,z))\Bigg]ds.
\end{split}
\end{equation}

The algorithm {\bf Averformula}, presented below, is based on \eqref{e3.2.2} that can be used to derive the formula of the $k$-th order integral function $y_k(\theta,z)$ ({\bf Substep 1}).

\begin{algorithm}[H]
\caption{{\bf Averformula}$(k)$}
\hspace*{0.02in} {\bf Input:}
a order $k\geq1$ of the normal form \eqref{eq2.01}\\
\hspace*{0.02in} {\bf Output:}
a set of formulae $Y_k$ associated to the integral function $y_k(\theta,z)$
\begin{algorithmic}[1]
\State $\mbox{SU}:=0$; $\mbox{TU}:=0$;
\For{$\ell$ {\bf from} 1 {\bf to} $k-1$}
\For{$m$ {\bf from} 1 {\bf to} $\ell$}
\State $\mbox{SU}:=\mbox{SU}+\frac{1}{\ell!}\cdot\mbox{Diff}
(F_{k-\ell}(s,z),z\$m)\cdot\mbox{{\bf IncompleteBellB}}(\ell,m,y_1(s,z),\ldots,y_{\ell-m+1}(s,z))$;
\State $\mbox{TU}:=\mbox{TU}+\frac{1}{\ell!}\cdot\mbox{Diff}
(F_{k-\ell},r\$m)\cdot\mbox{{\bf IncompleteBellB}}(\ell,m,y_1,\ldots,y_{\ell-m+1})$;
\EndFor
\EndFor
\State
\begin{small}
$Y_k:=\Big\{\int_0^{\theta}k!\cdot\big(F_k(s,z)+SU\big)ds,\big[ \int_0^{\theta}k!\cdot\big(F_k+TU\big)d\theta,\int_0^{2\pi}\big(F_k+TU\big)d\theta\big]\Big\}$;
\end{small}
\State \Return $Y_k$;
\end{algorithmic}
\end{algorithm}

For the generation of the Bell polynomials (lines 4 and 5) we use the routine \textit{IncompleteBellB} built-in Maple. We give the outputs $Y_k$ of the algorithm for $k=1,2$ (see \eqref{B01} in Appendix \ref{B}). Note that the formula of $y_k(\theta,z)$ is the first element in the set $Y_k$. The second element in $Y_k$ (where $F_k$ without the dependence on $(s,z)$) can be used to derive an exact expression of $f_k$ if we give a concrete differential system \eqref{eq2.04} (then $F_k$ can be assigned to values by the algorithm {\bf Normalize}), see next algorithm {\bf AverFun}. We also remark that the formula for the $k$-th order averaged function $f_k$ can be obtained directly from $y_k$ using \eqref{eq2.01.2}, so we omit the formula for $f_k$ in our algorithm {\bf Averformula}. We deduce explicitly the formulae of $y_k$'s up to $k=5$ (see \eqref{BB1} in Appendix \ref{B}); one can verify that the outputs of our algorithm are consistent with the results given in \cite{jdm14,jjd17}. In fact our algorithm can compute arbitrarily high order formulae of $y_k$'s. In Section 4, we will study a cubic differential system (\textit{Collins First Form}) and a class of generalized Kukles systems to show the feasibility of our algorithm.

In the last subsection, we provide an algorithm {\bf Normalize} to transform system \eqref{eq2.04} into the form of $dr/d\theta$ (normal form of averaging). The algorithm {\bf AverFun}, presented below, is based on the algorithms {\bf Normalize} and {\bf Averformula}, which provides a straightforward calculation method to derive the exact expression of the $k$-th order averaged function for a given differential system in the form \eqref{eq2.04} ({\bf Substep 2}).
\begin{small}
\begin{algorithm}[H]
\caption{{\bf AverFun}$(\bar{P}_{\bar{\alpha}},\bar{Q}_{\bar{\beta}},p_{\mathbb{\alpha}},q_{\mathbb{\beta}},k)$}
\hspace*{0.02in} {\bf Input:}
a perturbed system \eqref{eq2.04} with a order $k\geq1$\\
\hspace*{0.02in} {\bf Output:}
an expression of the $k$-th order averaged function $f_k$ of $dr/d\theta$
\begin{algorithmic}[1]
\State $dr/d\theta:=\mbox{\bf Normalize}(\bar{P}_{\bar{\alpha}},\bar{Q}_{\bar{\beta}},p_{\mathbb{\alpha}},q_{\mathbb{\beta}},k)$;
\For{$h$ {\bf from} 1 {\bf to} $k$}
\State $F_h:=\mbox{coeff}(dr/d\theta,\varepsilon^h)$;
\State $Y_h:=\mbox{{\bf Averformula}}(h)$;
\State $y_h:=\mbox{value}(\mbox{op}(1,\mbox{op}(2,Y_h)))$;
\State $f_h:=\mbox{factor}(\mbox{value}(\mbox{op}(2,\mbox{op}(2,Y_h))))$;
\EndFor
\State \Return $f_k$;
\end{algorithmic}
\end{algorithm}
\end{small}
According to our Theorem \ref{tt1}, we know that the output of the algorithm {\bf AverFun} has the property $f_k\in\mathbb{Q}[\pi][r,r^{-1}]$. And the numerator of the expression $f_k$ is a polynomial function with degree $N_k$. In practice, the calculation of $f_k$ typically requires powerful computer resources as the computational complexity grows exponentially with order $k$.  It turns out that we can greatly improve the speed by updating the obtained $dr/d\theta$ by using the conditions $f_1\equiv f_2\equiv\cdots\equiv f_{k-1}\equiv0$.

We implemented all the algorithms presented in this section in Maple. In the next section, we will apply our general algorithmic approach to analyze the bifurcation of limit cycles for several concrete differential systems in order to show its feasibility.

\section{Experiments}\label{sect5}
In this section, we present the bifurcation of limit cycles for a cubic polynomial differential system as an illustration of our approach explained above. In addition, the bifurcation of limit cycles from the centers of a class of generalized Kukles polynomial differential systems of degree 6 is studied when it is perturbed inside the class of all polynomial differential systems of the same degree, and as an application of our method, we also report some results on quadratic differential systems with isochronous centers. The obtained results of our experiments show the feasibility of our approach.

\subsection{Illustrative Example}\label{sect4.0}
In this subsection, we consider a cubic center of the following polynomial system
\begin{equation}\label{e4.1}
\begin{split}
\dot{x}=-y+x^2y,\quad\dot{y}=x+xy^2.
\end{split}
\end{equation}
This system is known as \textit{Collins First Form}, see \cite{jj15} for more details.

More concretely, we consider the perturbations of \eqref{e4.1} in the form of
\begin{equation}\label{e4.2}
\begin{split}
\dot{x}&=-y+x^2y+\sum_{s=1}^7\varepsilon^sp_s(x,y),\\
\dot{y}&=x+xy^2+\sum_{s=1}^7\varepsilon^sq_s(x,y),
\end{split}
\end{equation}
where
\begin{equation}\label{e4.3}
\begin{split}
p_s(x,y)&=\alpha_{s,1}x+\alpha_{s,2}y+\alpha_{s,3}x^2+\alpha_{s,4}xy+\alpha_{s,5}y^2
+\alpha_{s,6}x^3\\
&\quad+\alpha_{s,7}x^2y+\alpha_{s,8}xy^2+\alpha_{s,9}y^3,\\
q_s(x,y)&=\beta_{s,1}x+\beta_{s,2}y+\beta_{s,3}x^2+\beta_{s,4}xy+\beta_{s,5}y^2
+\beta_{s,6}x^3\\
&\quad+\beta_{s,7}x^2y+\beta_{s,8}xy^2+\beta_{s,9}y^3,\nonumber
\end{split}
\end{equation}
being $\alpha_{s,j}$ and $\beta_{s,j}$, for $s=1,\ldots,7$ and $j=1,\ldots,9$, real constants.

Next, we use our algorithms to study the maximum number of limit cycles of \eqref{e4.2} that bifurcate from the center of \eqref{e4.1}. Applying our algorithm {\bf Normalize} by taking $k=7$ we obtain
\begin{equation}\label{e4.4}
\begin{split}
\frac{d r}{d\theta}=\sum_{i=1}^7\varepsilon^i F_i(\theta,r)+\mathcal{O}(\varepsilon^8).
\end{split}
\end{equation}
Here we give only the expression of $F_1(\theta,r)$, the explicit expressions of $F_i(\theta,r)$ for $i=2,\ldots,7$ are quite large so we omit them.
\[F_1(\theta,r)=r(\alpha_{1,2}+\beta_{1,1})SC+r(-\alpha_{1,1}+\beta_{1,2})S^2+r\alpha_{1,1}\]
with $C=\cos\theta$ and $S=\sin\theta$.

Using our algorithm {\bf AverFun} in Section \ref{sect4} and computing $f_1$ we obtain $f_1(r)=\pi r(\alpha_{1,1}+\beta_{1,2})$. Clearly equation $f_1(r)$ has no positive zeros. Thus the first averaged function does not provide any information about the limit cycles that bifurcate from the center of \eqref{e4.1} when we perturb it.

Computing $f_2$ we obtain
\begin{equation}\label{ef2}
\begin{split}
f_2(r)&=\frac{\pi r}{2}\big(\pi\alpha_{1,1}^2+2\pi\alpha_{1,1}\beta_{1,2}+\pi\beta_{1,2}^2
+\alpha_{1,1}\alpha_{1,2}-\alpha_{1,1}\beta_{1,1}\\
&\quad+\alpha_{1,2}\beta_{1,2}-\beta_{1,1}\beta_{1,2}+2\alpha_{2,1}+2\beta_{2,2}\big).\nonumber
\end{split}
\end{equation}

According to our Theorem \ref{tt1}, we take $\bar{f}_2(r)=f_2(r)$ with degree $N_2=1$, and $c_1$ is a polynomial in $\pi$ with degree 2. Note that $f_1(r)=0$ means that $\beta_{1,2}=-\alpha_{1,1}$. Using this condition we can simplify $f_2(r)$ into the form $f_2(r)=\pi r(\alpha_{2,1}+\beta_{2,2})$. As for the first averaged function, the second one also does not provide information on the bifurcating limit cycles. From now on, for each $k=3,\ldots,7$, we will perform the calculation of the averaged function $f_k$ under the hypothesis $f_j\equiv0$ for $j=1,\ldots,k-1$.

Doing $\beta_{2,2}=-\alpha_{2,1}$ and computing $f_3$ we obtain
\[f_3(r)=\frac{1}{4}\pi r\left(A_2r^2+A_0\right),\]
where
\[A_2=4\alpha_{1,1}+3\alpha_{1,6}+\alpha_{1,8}+\beta_{1,7}+3\beta_{1,9},\quad A_0=4(\alpha_{3,1}+\beta_{3,2}).\]
Therefore $f_3(r)$ can have at most one positive real root. From Theorem \ref{tt0} it follows that the 3-th order averaging provides the existence of at most one small-amplitude limit cycle of system \eqref{e4.2} and this number can be reached by Lemma \ref{lem0}, since $A_i$ for $i=0,2$ are independent constants ($\partial(A_2,A_0)/\partial(\beta_{1,7},\beta_{3,2})=4\neq0$).

To consider the 4-th order averaging theorem we take $\beta_{1,7}=-A_2+\beta_{1,7}$ and $\beta_{3,2}=-A_0/4+\beta_{3,2}$. Computing $f_4$ we obtain
\[f_4(r)=\frac{1}{4}\pi r\left(B_2r^2+B_0\right),\]
where
\begin{equation}\label{e4.6}
\begin{split}
B_2&=4\alpha_{1,1}\alpha_{1,2}+2\alpha_{1,1}\alpha_{1,7}+2\alpha_{1,1}\beta_{1,8}+\alpha_{1,2}\alpha_{1,8}\\
&\quad+3\alpha_{1,2}\beta_{1,9}+\alpha_{1,3}\alpha_{1,4}-2\alpha_{1,3}\beta_{1,3}+\alpha_{1,4}\alpha_{1,5}\\
&\quad+2\alpha_{1,5}\beta_{1,5}+\alpha_{1,8}\beta_{1,1}+3\beta_{1,1}\beta_{1,9}-\beta_{1,3}\beta_{1,4}\\
&\quad-\beta_{1,4}\beta_{1,5}+4\alpha_{2,1}+3\alpha_{2,6}+\alpha_{2,8}+\beta_{2,7}+3\beta_{2,9},\\
B_0&=4(\alpha_{4,1}+\beta_{4,2}).\nonumber
\end{split}
\end{equation}
It is obvious that $f_4(r)$ can have at most one positive real root. From Theorem \ref{tt0} it follows that the 4-th order averaging provides the existence of at most one small-amplitude limit cycle of system \eqref{e4.2} and this number can be reached ($B_2$ and $B_0$ are independent constants).

Letting $\beta_{2,7}=-B_2+\beta_{2,7}$ and $\beta_{4,2}=-B_0/4+\beta_{4,2}$ we obtain $f_4(r)=0$. Computing $f_5$ we obtain
\[f_5(r)=\frac{1}{4}\pi r\left(C_4r^4+C_2r^2+C_0\right),\]
where
\begin{equation}
\begin{split}
C_4=2\alpha_{1,1}+2\alpha_{1,6}+\alpha_{1,8}+\beta_{1,9},\quad C_0=4(\alpha_{5,1}+\beta_{5,2}).\nonumber
\end{split}
\end{equation}
We do not explicitly provide the expression of $C_2$, because it is very long. It is not hard to check that $C_4$, $C_2$ and $C_0$ are independent constants. Therefore $f_5(r)$ can have at most two positive real roots. Then the 5-th order averaging provides the existence of at most two small-amplitude limit cycle of system \eqref{e4.2} and this number can be reached.

To consider the 6-th order averaging theorem we let $\beta_{1,9}=-C_4+\beta_{1,9}$, $\beta_{3,7}=-C_2+\beta_{3,7}$ and $\beta_{5,2}=-C_0/4+\beta_{5,2}$. Computing $f_6$ we obtain
\[f_6(r)=\frac{1}{24}\pi r\left(D_4r^4+D_2r^2+D_0\right),\]
where
\begin{equation}
\begin{split}
D_4&=12\alpha_{1,1}\alpha_{1,7}-6\alpha_{1,1}\alpha_{1,9}
-12\alpha_{1,1}\beta_{1,1}-18\alpha_{1,1}\beta_{1,6}\\
&\quad-12\alpha_{1,2}\alpha_{1,6}+7\alpha_{1,3}\alpha_{1,4}
-18\alpha_{1,3}\beta_{1,3}-20\alpha_{1,3}\beta_{1,5}\\
&\quad+7\alpha_{1,4}\alpha_{1,5}-4\alpha_{1,5}\beta_{1,3}
-6\alpha_{1,5}\beta_{1,5}-18\alpha_{1,6}\alpha_{1,9}\\
&\quad-12\alpha_{1,6}\beta_{1,1}-18\alpha_{1,6}\beta_{1,6}
-6\alpha_{1,8}\alpha_{1,9}-6\alpha_{1,8}\beta_{1,6}\\
&\quad+\beta_{1,3}\beta_{1,4}+\beta_{1,4}\beta_{1,5}
+12\alpha_{2,1}+12\alpha_{2,6}+6\alpha_{2,8}+6\beta_{2,9},\\
D_0&=24(\alpha_{6,1}+\beta_{6,2}).\nonumber
\end{split}
\end{equation}
Here we do not provide the explicit expression of $D_2$ because it is quite long. Moreover $D_4$, $D_2$ and $D_0$ are independent constants. In fact only $D_4$ presents the parameter $\alpha_{2,6}$, only $D_2$ has the parameter $\alpha_{2,2}$, and $D_0$ is the only one with parameters $\alpha_{6,1}$ and $\beta_{6,2}$. Hence $f_6(r)$ has at most two positive simple roots. Then the 6-th order averaging provides the existence of at most two small-amplitude limit cycle of system \eqref{e4.2} and this number can be reached.

To consider the 7-th order averaging theorem we take $\beta_{2,9}=-D_4/6+\beta_{2,9}$, $\beta_{4,7}=-D_2/6+\beta_{4,7}$ and $\beta_{6,2}=-D_0/24+\beta_{6,2}$. Computing $f_7$ we obtain
\[f_7(r)=-\frac{1}{48}\pi r\left(E_6r^6+E_4r^4+E_2r^2+E_0\right),\]
where
\begin{equation}
\begin{split}
E_6=-3(\alpha_{1,1}+\alpha_{1,6}+\alpha_{1,8}),\quad
E_0=-48(\alpha_{7,1}+\beta_{7,2}).\nonumber
\end{split}
\end{equation}
Here we do not provide the explicit expressions of $E_2$ and $E_4$ because they are quite long. Moreover $E_j$ for $j=0,2,4,6$ are independent constants. Hence $f_7(r)$ has at most three positive simple roots. Then the 7-th order averaging provides the existence of at most three small-amplitude limit cycle of system \eqref{e4.2} and this number can be reached.

We remark that our averaged functions $f_j(r)$ for $j=1,\ldots,5$ are consistent with the forms in \cite{jj15}. However, our averaged function $f_6(r)$ looks much simpler than the form given in \cite{jj15}, this is because we rigorously simplify the function $f_6(r)$ under the conditions $f_1\equiv f_2\equiv\cdots\equiv f_5\equiv0$. The averaged function $f_6(r)$ in \cite{jj15} is not correct, because the authors do not simplify this expression in a right way (in fact one should note that the isolated parameter $\beta_{3,7}$ contains the parameter $\beta_{1,9}$). As a consequence, the maximum number $\{3\}$ of limit cycles of system \eqref{e4.1} up to the 6-th order averaging they obtained can not be reached. Thus, some calculations of the averaged functions in \cite{jj15} need to be reconsidered algorithmically, using the algorithm and exact formula of the averaged function in this paper.

Here we restate the result related to the \textit{Collins First Form} as follows.
\begin{theorem}\label{tt4.1}
For $|\varepsilon|>0$ sufficiently small the maximum number of small-amplitude limit cycles of system \eqref{e4.2} is 3 using the 7-th order averaging method, and this number can be reached.
\end{theorem}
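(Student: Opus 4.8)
The plan is to prove the statement constructively, by computing the averaged functions $f_1,\ldots,f_7$ and bounding their positive simple zeros order by order; in fact the preceding worked computation already carries this out, so the proof consists in assembling those counts. First I would invoke the algorithm \textbf{Normalize} on system \eqref{e4.2} with $k=7$ to produce the normal form $dr/d\theta=\sum_{i=1}^7\varepsilon^iF_i(\theta,r)+\mathcal{O}(\varepsilon^8)$. Since the unperturbed part of \eqref{e4.1} is a center at the origin, the hypothesis $F_0=0$ holds (the \textbf{if}-test of \textbf{Normalize} passes), so the simplified recurrence \eqref{eq3.3.0} is the one in force and Theorem~\ref{tt1} applies throughout.

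Next I would run \textbf{AverFun} recursively. The essential methodological point, as prescribed at the end of Section~\ref{sect3.2}, is that at each stage $k$, before computing $f_k$, one substitutes into $dr/d\theta$ the parameter relations obtained by imposing $f_1\equiv f_2\equiv\cdots\equiv f_{k-1}\equiv0$. By Theorem~\ref{tt1} each resulting $f_k$ has the shape $\pi r\cdot\bar{p}_k(r)$ with $\bar{p}_k$ a polynomial over $\mathbb{Q}[\pi]$; for this particular center only even powers of $r$ survive, so in fact $f_k=\pi r\cdot p_k(r^2)$. Setting $u=r^2$ and applying the Descartes theorem (Lemma~\ref{lem0}) to $p_k(u)$, the number of positive simple zeros of $f_k$ is bounded by the number of sign variations of $p_k$. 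The degrees that emerge in the computation are $\deg p_1=\deg p_2=0$, $\deg p_3=\deg p_4=1$, $\deg p_5=\deg p_6=2$, and $\deg p_7=3$, yielding the per-order bounds $0,0,1,1,2,2,3$ for $k=1,\ldots,7$. Hence the largest count is $3$, attained at order $7$, which establishes the upper bound.

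To show the bound $3$ is sharp I would verify that the four coefficients $E_0,E_2,E_4,E_6$ of $p_7(u)$ are independent free parameters. By the ``moreover'' clause of Lemma~\ref{lem0}, for a polynomial with $4$ nonzero terms one may prescribe the signs to force $3$ sign alternations and thereby realize $3$ positive simple roots; each such root is a simple zero of $f_7$, and by Theorem~\ref{tt0} it yields a small-amplitude limit cycle of \eqref{e4.2}. Concretely this amounts to exhibiting, for each $E_{2j}$, a perturbation parameter occurring in that coefficient alone — for instance $\alpha_{7,1},\beta_{7,2}$ enter only through $E_0$ — which is precisely the independence recorded in the computation.

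The delicate part, and the reason an exact algorithmic treatment is indispensable, is the bookkeeping of the sequential substitutions $f_j\equiv0$. The relations introduced at low orders silently re-express some high-order parameters in terms of others, so coefficients that look independent may in fact be coupled. This is exactly the trap that invalidates the order-$6$ computation in \cite{jj15}, where the ``isolated'' parameter $\beta_{3,7}$ secretly carries $\beta_{1,9}$. Carrying out every substitution exactly inside \textbf{AverFun} is what guarantees that the $E_j$ (and the lower-order coefficients $A_i,\ldots,D_i$) are genuinely independent, and thus that the count $3$ is both an upper bound and achievable.
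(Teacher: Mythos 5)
Your proposal is correct and takes essentially the same route as the paper's own argument: run \textbf{Normalize}/\textbf{AverFun} order by order, impose $f_1\equiv\cdots\equiv f_{k-1}\equiv0$ exactly before computing $f_k$, bound the positive simple zeros of $f_k=\pi r\,p_k(r^2)$ by Lemma~\ref{lem0}, and realize three zeros at order $7$ from the independence of $E_0,E_2,E_4,E_6$, exactly as in Section~\ref{sect4.0}. One small correction: $F_0=0$ holds not because the unperturbed system \eqref{e4.1} has a center, but because the perturbations \eqref{e4.3} contain no constant terms (i.e.\ $p_0^1=q_0^1=0$); the Kukles system \eqref{eq4.2.1} in Section~\ref{sect4.1} also perturbs a center, yet its constant terms $a_{1,0,0},b_{1,0,0}$ force $F_0\neq0$ until they are removed.
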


\subsection{A Class of Generalized Kukles Differential Systems}\label{sect4.1}
In this subsection we consider the perturbations
\begin{equation}\label{eq4.2.1}
\begin{split}
\dot{x}&=-y+\sum_{s=1}^{6}\sum_{j=0}^6\sum_{i=0}^j\varepsilon^sa_{s,j,i}x^{j-i}y^i,\\
\dot{y}&=x+ax^5y+bx^3y^3+cxy^5+\sum_{s=1}^{6}\sum_{j=0}^6\sum_{i=0}^j\varepsilon^sb_{s,j,i}x^{j-i}y^i
\end{split}
\end{equation}
of system $\eqref{eq4.2.1}_{\varepsilon=0}$, where $a_{s,j,i}$ and $b_{s,j,i}$ are real parameters, for $s=1,\ldots,6$, $0\leq i\leq j\leq6$, and $a, b, c$ are real coefficients satisfying $a^2+b^2+c^2\neq0$. We note that the bifurcation of limit cycles of $\eqref{eq4.2.1}$ has been studied in \cite{jm16} up to 6-th order averaging theorem (\cite{jm16}, Section 7.3). Here restudy it by using our algorithmic approach to illustrate its feasibility.

We remark that taking $k=6$ our algorithm {\bf Normalize} can not  pass the {\bf if} hypothesis, this is because the unperturbed term (i.e., the constant term of $H$ in the algorithm {\bf Normalize}) \[F_0=\frac{r(a_{1,0,0}C+b_{1,0,0}S)}{r-a_{1,0,0}S+b_{1,0,0}C}\Big|_{C=\cos\theta,S=\sin\theta}\]
does not vanish. So we have to exclude the perturbed terms $\varepsilon a_{1,0,0}$ and $\varepsilon b_{1,0,0}$ in $\eqref{eq4.2.1}$. However, the authors in \cite{jm16} obtained a wrong expression of $F_0$ in the form
\[F_0=\frac{ra_{1,0,0}(C+S)}{r+a_{1,0,0}(C-S)}\Big|_{C=\cos\theta,S=\sin\theta}.\]
In fact one can easily check this mistake by manual calculation. So the calculations of the averaged functions of system $\eqref{eq4.2.1}$ in \cite{jm16} must be redone.

Now consider system $\eqref{eq4.2.1}$, letting $a_{1,0,0}=b_{1,0,0}=0$ and using our algorithm {\bf AverFun} in Section \ref{sect4} we obtain the averaged functions up to 6-th order as follows. Since the calculations and arguments are quite similar to those used in the previous subsection we do not explicitly present the process here.
\begin{equation}\label{eq4.2.3}
\begin{split}
f_1(r)&=\pi r(a_{1,1,0}+b_{1,1,1}),\quad f_2(r)=\pi r(a_{2,1,0}+b_{2,1,1}),\\
f_3(r)&=\frac{1}{4}\pi r(E_2r^2+E_0),\quad f_4(r)=\frac{1}{4}\pi r(G_2r^2+G_0),\\
f_5(r)&=\frac{1}{8}\pi r(H_4r^4+H_2r^2+H_0),\\
f_6(r)&=-\frac{1}{24}\pi r(I_4r^4+I_2r^2+I_0).
\end{split}
\end{equation}
The expressions of $E_i,G_i$ for $i=0,2$ and $H_j,I_j$ for $j=0,2,4$ are quite long so we omit them for brevity.


In view of these expressions in \eqref{eq4.2.3}, we verified that (Theorem 3 in \cite{jm16}) the averaging theorem up to sixth order provides the existence of at most two small-amplitude limit cycles of system \eqref{eq4.2.1}.


\subsection{Quadratic Systems}\label{sect4.3}
In order to save space, we put the results in Appendix \ref{C}.

\section{Discussions}\label{sect6}
In this paper we present a systematical approach to study the maximum number of limit cycles of differential system \eqref{eq2.04} for $|\varepsilon|>0$ sufficiently small, which bifurcate from the centers of differential systems in the form of \eqref{eq2.03}. In general, we give three algorithms to analyze the averaging method. Then with the aid of these algorithms, we reduce the study of the number of limit cycles of system \eqref{eq2.04} to the problem of estimating the number of simple zeros of the obtained averaged functions. Theoretically, we show that the maximum number of limit cycles of system \eqref{eq2.04} has no more than $kn_2$ (a rough bound) by using the $k$-th order averaging method. We believe that the first averaged function $f_k$ which is not identically zero is a polynomial in $r$ with odd terms. However, we cannot prove this, we leave this as a future research problem.

We remark that, though in the present paper, we focus our attention on the study of bifurcation of limit cycles of the continuous differential system \eqref{eq2.04}, the developed algorithmic approach admits a generalization to the case of studying the bifurcation of limit cycles for discontinuous differential systems. It is of great interest to employ our approach to analyze the bifurcation of limit cycles for differential systems in many different fields (biology, chemistry, economics, engineering, mathematics, physics, etc.). It will be beneficial to generalize our approach to the case of higher dimension differential systems by using the general form of the averaging method. We leave this as the future research problems. Furthermore, how to simplify and optimize the steps of the computations of the averaged functions is also worthy of further study.

%
\begin{acks}
Huang's work is partially supported by China Scholarship Council under Grant No.:~201806020128. Yap's work is partially supported by NSF Grants \#CCF-1423228 and \#CCF-1564132, and also a Chinese Academy of Science (Beijing) President's International Fellowship Initiative (2018), and Beihang International Visiting Professor Program No. Z2018060. The first author is grateful to Professor Dongming Wang for his encouragement and helpful suggestions, and to Chee Yap for inviting him to visit NYU Courant. Both authors thank the anonymous referees for their valuable comments on improving the presentation.
\end{acks}

%
\bibliographystyle{ACM-Reference-Format}
\balance
\bibliography{ref}

%
\appendix
\newpage
\section{Proof of Theorem \ref{tt1}}\label{A}
We first give some lemmas before we prove the Theorem \ref{tt1}. The following lemma plays a key role in determining the numbers $\nu_i$ and $N_i$.
\begin{lemma}\label{lemA1}
If $k\geq2$ and $p\in\mathbb{N}$, then for any polynomial $g_{n}(x)$ of degree $n$, \[\left(\frac{g_{n}(x)}{x^{k-1}}\right)^{(p)}=\frac{\bar{g}_{n}(x)}{x^{k+p-1}},\]
where $\bar{g}_{n}(x)$ is a polynomial of degree no more than $n$. Here $g^{(p)}$ denotes the $p$-order derivative of a function $g$.
\end{lemma}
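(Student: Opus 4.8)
The plan is to prove the statement by induction on the order $p$ of differentiation, with essentially all the content concentrated in a single-derivative step. The key observation is that differentiating a rational function whose denominator is a pure power of $x$ preserves that shape while raising the denominator exponent by exactly one, and that the accompanying transformation of the numerator (differentiate, then multiply by $x$) never increases its degree. So the degree bound $n$ is an invariant that rides through all $p$ derivatives.

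First I would dispose of the base case. For $p=0$ the claim is trivial, since $\bar g_n = g_n$ has degree $n\le n$ and the denominator is $x^{k-1}$. For $p=1$ a direct quotient-rule computation gives
\[
\left(\frac{g_n(x)}{x^{k-1}}\right)' = \frac{g_n'(x)\,x^{k-1} - (k-1)g_n(x)\,x^{k-2}}{x^{2(k-1)}} = \frac{x\,g_n'(x) - (k-1)g_n(x)}{x^{k}},
\]
where the new numerator $x\,g_n'(x) - (k-1)g_n(x)$ has degree at most $n$: both $x\,g_n'$ and $(k-1)g_n$ have degree $n$, so their difference has degree $\le n$ (the leading term can cancel, e.g.\ when $n=k-1$, but that only helps). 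Thus the assertion holds for $p=1$ with denominator $x^{k}=x^{k+1-1}$.

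For the inductive step, suppose $(g_n/x^{k-1})^{(p)} = \bar g_n/x^{k+p-1}$ with $\deg \bar g_n \le n$. Setting $k':=k+p\ge 2$ so that $x^{k+p-1}=x^{k'-1}$, one further differentiation is exactly the $p=1$ computation applied to $\bar g_n/x^{k'-1}$, yielding
\[
\left(\frac{\bar g_n}{x^{k'-1}}\right)' = \frac{x\,\bar g_n'(x) - (k+p-1)\bar g_n(x)}{x^{k+p}}.
\]
The same degree bookkeeping shows the new numerator has degree $\le n$, and the denominator is $x^{k+p}=x^{k+(p+1)-1}$, which closes the induction.

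I expect no serious obstacle; the only point demanding care is the degree accounting, namely confirming that each differentiation lowers the numerator degree by one while the compensating factor of $x$ raises it by one, so that the bound stays at $n$ rather than drifting with $p$. An even quicker, non-inductive route is the general Leibniz rule: expanding $(g_n\cdot x^{-(k-1)})^{(p)}=\sum_{j=0}^{p}\binom{p}{j}\,g_n^{(j)}\,(x^{-(k-1)})^{(p-j)}$, each summand is a constant multiple of $g_n^{(j)}(x)\,x^{-(k+p-1)+j}=g_n^{(j)}(x)\,x^{j}/x^{k+p-1}$, whose numerator $g_n^{(j)}(x)\,x^{j}$ has degree $(n-j)+j=n$; collecting over the common denominator $x^{k+p-1}$ then exhibits $\bar g_n$ as a sum of polynomials of degree $\le n$, which is the claim outright.
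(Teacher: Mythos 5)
Your proof is correct, but it takes a genuinely different route from the paper's. The paper disposes of the lemma in one line by reducing to monomials: writing $g_n(x)=\sum_q c_q x^q$, it observes that
\[
\left(\frac{x^q}{x^{k-1}}\right)^{(p)}=(q-k+1)(q-k)\cdots(q-k+2-p)\,\frac{x^q}{x^{k+p-1}},\qquad q\in\mathbb{N},
\]
i.e.\ the quotient is just the power $x^{q-k+1}$, whose $p$-th derivative is computed directly; by linearity, $\bar g_n$ is the same combination of monomials $x^q$ with each coefficient rescaled by a falling-factorial constant (possibly zero), so $\deg\bar g_n\le n$ is immediate. Your primary argument instead runs an induction on $p$ through the quotient rule, with the invariant that the numerator degree stays $\le n$; this is equally valid, just longer and without the explicit coefficient formula --- note only that your inductive step applies the $p=1$ computation to a polynomial whose degree may be strictly less than $n$, which your one-sided degree estimates do cover. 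Your Leibniz-rule alternative is closer in spirit to the paper (a direct, non-inductive computation giving $\bar g_n$ explicitly), but it splits the derivative across the product $g_n\cdot x^{-(k-1)}$ rather than splitting $g_n$ into monomials; of the two explicit computations, the paper's monomial version is the more economical, while your inductive version is the one that generalizes most mechanically if the denominator were replaced by something other than a pure power of $x$.
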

\begin{proof}
The lemma follows directly from the following equality
\[\left(\frac{x^q}{x^{k-1}}\right)^{(p)}=(q-k+1)(q-k)\cdots(q-k+2-p)\frac{x^q}{x^{k+p-1}},\quad q\in\mathbb{N}.\]
\end{proof}

The lemma described below can be used to determine the expression form of the averaged function $f_i(r)$ in Theorem \ref{tt1}.
\begin{lemma}\label{lemA2}
Define the integral function
\begin{equation}\label{e3.9}
\begin{split}
M_{i,j,k}=\int_0^{\theta}s^i\sin^js\cos^ks ds,\quad i,j,k\in \mathbb{N}_+.
\end{split}
\end{equation}
Then we have the following recursive formula for $M_{i,j,k}$
\begin{equation}\label{e3.11}
\begin{split}
(j+1)M_{i,j,k}&=\theta^i\sin^{j+1}\theta\cos^{k-1}\theta-iM_{i-1,j+1,k-1}\\
&\quad+(k-1)M_{i,j+2,k-2}.
\end{split}
\end{equation}
Moreover, when $k=0$, we have
\begin{equation}\label{e3.14}
\begin{split}
j^2M_{i,j,0}&=-j\theta^i\sin^{j-1}\theta\cos\theta+i\theta^{i-1}\sin^j\theta\\
&+j(j-1)M_{i,j-2,0}-i(i-1)M_{i-2,j,0}.
\end{split}
\end{equation}

\end{lemma}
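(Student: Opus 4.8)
The plan is to prove both identities by integration by parts, each time choosing the factor to differentiate so that the power of $s$ drops and the trigonometric exponents shift in exactly the directions dictated by the right-hand sides. Throughout, the boundary contributions at the lower limit $s=0$ vanish, because the relevant powers of $s$ and of $\sin s$ are positive, so only the value at $s=\theta$ survives.

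For the general recursion \eqref{e3.11}, I would first isolate an exact derivative inside the $\sin$ factor: since $\sin^j s\cos s=\frac{1}{j+1}\frac{d}{ds}\bigl(\sin^{j+1}s\bigr)$, I set $u=s^i\cos^{k-1}s$ and $dv=\frac{1}{j+1}\,d\bigl(\sin^{j+1}s\bigr)$. One integration by parts produces the boundary term $\frac{1}{j+1}\theta^i\sin^{j+1}\theta\cos^{k-1}\theta$ together with $-\frac{1}{j+1}\int_0^\theta \sin^{j+1}s\,\frac{d}{ds}\bigl(s^i\cos^{k-1}s\bigr)\,ds$. Expanding $\frac{d}{ds}(s^i\cos^{k-1}s)=i\,s^{i-1}\cos^{k-1}s-(k-1)s^i\cos^{k-2}s\sin s$ and recognizing the two resulting integrals as $M_{i-1,j+1,k-1}$ and $M_{i,j+2,k-2}$ gives \eqref{e3.11} after multiplying through by $j+1$. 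The seemingly problematic factor $\cos^{k-2}s$ is harmless, since its coefficient $(k-1)$ vanishes precisely when $k=1$.

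The case $k=0$ in \eqref{e3.14} is the more delicate one, and is where I expect the real work. Here there is no cosine available to absorb a derivative, so I would instead integrate the exact derivative $\frac{d}{ds}\bigl(s^i\sin^{j-1}s\cos s\bigr)$, expand it, and use $\cos^2 s=1-\sin^2 s$ to rewrite the $\sin^{j-2}s\cos^2 s$ piece in terms of $\sin^{j-2}s$ and $\sin^j s$. This yields the first relation $j\,M_{i,j,0}=-\theta^i\sin^{j-1}\theta\cos\theta+i\,M_{i-1,j-1,1}+(j-1)M_{i,j-2,0}$, which still carries the unwanted $k=1$ term $M_{i-1,j-1,1}$. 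The essential extra step is a second integration by parts on that term, using $\sin^{j-1}s\cos s=\frac{1}{j}\frac{d}{ds}(\sin^j s)$ with $u=s^{i-1}$, which gives $j\,M_{i-1,j-1,1}=\theta^{i-1}\sin^j\theta-(i-1)M_{i-2,j,0}$. Multiplying the first relation by $j$ and substituting this second one eliminates the cosine entirely and produces \eqref{e3.14}.

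The main obstacle is therefore not any single calculation but the two-stage structure of the $k=0$ case: one must recognize that a single integration by parts cannot remain within $k=0$ (it unavoidably reintroduces a $\cos$ factor), and that the correct remedy is to lower the power of $s$ by two via a second integration by parts rather than attempting to reduce $j$ directly. Tracking the coefficients $i$, $j$, $j-1$, $i-1$ carefully, so that the left-hand side ends up with exactly the factor $j^2$, is the one point where genuine attention is needed.
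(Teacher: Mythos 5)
Your proposal is correct and takes essentially the same approach as the paper: both identities are proved by integration by parts, with the identical two-stage argument for $k=0$ (first deriving $jM_{i,j,0}=-\theta^i\sin^{j-1}\theta\cos\theta+iM_{i-1,j-1,1}+(j-1)M_{i,j-2,0}$ via $\cos^2 s=1-\sin^2 s$, then eliminating $M_{i-1,j-1,1}$ with a second integration by parts, exactly as in the paper's equations \eqref{e3.12} and \eqref{e3.13}). The only cosmetic difference is in \eqref{e3.11}: you fold $\sin^j s\cos s\,ds$ into $\frac{1}{j+1}d\bigl(\sin^{j+1}s\bigr)$ so that no $M_{i,j,k}$ term reappears on the right, whereas the paper folds in only $\cos s\,ds=d(\sin s)$ and moves the resulting $-jM_{i,j,k}$ to the left-hand side; both yield the same recursion.
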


\begin{proof}
Doing integration by parts for \eqref{e3.9}, we have
\begin{equation}\label{e3.10}
\begin{split}
M_{i,j,k}&=\int_0^{\theta}s^i\sin^js\cos^{k-1}s(\sin s)' ds\\
&=\theta^i\sin^{j+1}\theta\cos^{k-1}\theta-\int_0^{\theta}\sin s\left(s^i\sin^js\cos^{k-1}s\right)'ds\\
&=\theta^i\sin^{j+1}\theta\cos^{k-1}\theta-iM_{i-1,j+1,k-1}\\
&\quad-jM_{i,j,k}+(k-1)M_{i,j+2,k-2}.\nonumber
\end{split}
\end{equation}
Then we find the recursive integral formula \eqref{e3.11}.

When $k=0$, doing integration by parts for $M_{i,j,0}$, we obtain
\begin{equation}\label{e3.12}
\begin{split}
M_{i,j,0}&=-\int_0^{\theta}s^i\sin^{j-1}s(\cos s)'ds\\
&=-\theta^i\sin^{j-1}\theta\cos\theta+(j-1)M_{i,j-2,0}\\
&\quad-(j-1)M_{i,j,0}+iM_{i-1,j-1,1}.
\end{split}
\end{equation}
On the other hand, by doing integration by parts for $M_{i-1,j-1,1}$, in a similar way we have
\begin{equation}\label{e3.13}
\begin{split}
jM_{i-1,j-1,1}=\theta^{i-1}\sin^j\theta-(i-1)M_{i-2,j,0}.
\end{split}
\end{equation}
Using equations \eqref{e3.12} and \eqref{e3.13}, we obtain \eqref{e3.14}.

\end{proof}

{\bf Proof of Theorem \ref{tt1}.}
Now refer to system \eqref{eq2.04}, we define the perturbed terms
\begin{equation}
\begin{split}
\tilde{p}_j(x,y)=\sum_{t=0}^{n_2}p_t^{j}(x,y),\quad
\tilde{q}_j(x,y)=\sum_{t=0}^{n_2}q_t^{j}(x,y),\nonumber
\end{split}
\end{equation}
with $p_t^{j}$, $q_t^{j}$ homogeneous polynomials of degree $t$. The change of coordinates \[x=\varepsilon X,\quad y=\varepsilon Y\]
carries system \eqref{eq2.04} into
\begin{equation}\label{e3.1}
\begin{split}
\dot{X}&=-Y+\sum_{m=2}^{n_1}\varepsilon^{m-1}P_m(X,Y)+\sum_{j=1}^k\sum_{t=0}^{n_2}\varepsilon^{j+t-1}p_t^{j}(X,Y),\\
\dot{Y}&=X+\sum_{m=2}^{n_1}\varepsilon^{m-1}Q_m(X,Y)+\sum_{j=1}^k\sum_{t=0}^{n_2}\varepsilon^{j+t-1}q_t^{j}(X,Y).
\end{split}
\end{equation}
In polar coordinates $X=rC$ and $Y=rS$ with $C=\cos\theta$, $S=\sin\theta$, system \eqref{e3.1} has the form
\begin{equation}
\begin{split}
\dot{r}=\frac{X\dot{X}+Y\dot{Y}}{r}\Big|_{X=rC,Y=rS},\quad
\dot{\theta}=\frac{X\dot{Y}-Y\dot{X}}{r^2}\Big|_{X=rC,Y=rS}.\nonumber
\end{split}
\end{equation}
Then
\begin{equation}\label{e3.2}
\begin{split}
\frac{dr}{d\theta}=r\frac{X\dot{X}+Y\dot{Y}}{X\dot{Y}-Y\dot{X}}\Big|_{X=rC,Y=rS}
=\frac{H_1(r,C,S,\varepsilon)}{r+H_2(r,C,S,\varepsilon)},
\end{split}
\end{equation}
where
\begin{equation}\label{e3.3}
\begin{split}
H_1(r,C,S,\varepsilon)&=\sum_{m=2}^{n_1}\varepsilon^{m-1}r^{m+1}[P_m(C,S)C+Q_m(C,S)S]\\
&+\sum_{j=1}^k\sum_{t=0}^{n_2}\varepsilon^{j+t-1}r^{t+1}[p_t^j(C,S)C+q_t^j(C,S)S],\\
H_2(r,C,S,\varepsilon)&=\sum_{m=2}^{n_1}\varepsilon^{m-1}r^{m}[Q_m(C,S)C-P_m(C,S)S]\\
&+\sum_{j=1}^k\sum_{t=0}^{n_2}\varepsilon^{j+t-1}r^{t}[q_t^j(C,S)C-p_t^j(C,S)S].\nonumber
\end{split}
\end{equation}
Computing the first-order Taylor expansion of $dr/d\theta$ in $\varepsilon$ we obtain
\[F_0=\frac{r[p_0^1(C,S)C+q_0^1(C,S)S]}{r+q_0^1(C,S)C-p_0^1(C,S)S}.\]
Since we assume that $F_0=0$, we need to let $p_0^1=q_0^1=0$. Then the resulting expression of $dr/d\theta$ is of the form
\begin{equation}\label{e3.4}
\begin{split}
\frac{dr}{d\theta}=\frac{B_1(r,C,S)\varepsilon+\cdots+B_{n_2+k-1}(r,C,S)\varepsilon^{n_2+k-1}}
{r+A_1(r,C,S)\varepsilon+\cdots+A_{n_2+k-1}(r,C,S)\varepsilon^{n_2+k-1}},
\end{split}
\end{equation}
where
\begin{equation}\label{e3.5}
\begin{split}
A_1(r,C,S)&=r^2[Q_2(C,S)C-P_2(C,S)S]\\
&\quad+\sum_{t=0}^1r^t[q_t^{2-t}(C,S)C-p_t^{2-t}(C,S)S],\\
B_1(r,C,S)&=r^3[P_2(C,S)C+Q_2(C,S)S]\\
&\quad+\sum_{t=0}^1r^{t+1}[p_{t}^{2-t}(C,S)C+q_{t}^{2-t}(C,S)S]\nonumber
\end{split}
\end{equation}
and the expressions of $A_i$ and $B_i$ for $i=2,\ldots,n_2+k-1$ are summation of a kind of polynomial functions in the form $r^{i_1}\bar{H}_{i_1}(C,S)$ with $i_1$ non-negative integer and $\bar{H}_{i_1}$ polynomial function in the variables $C$ and $S$. Moreover, by observing \eqref{e3.2} we know that $B_i$ for $i=2,\ldots,k-1$ is a polynomial in $r$ of degree at most $n_2+1$ without constant term; and $A_i$ is a polynomial in $r$ of degree at most $n_2$ in the form:
\begin{equation}\label{e3.5.0}
\begin{split}
A_i(r,C,S)=\bar{A}_{i,0}(C,S)+\bar{A}_{i,1}(r,C,S),\quad i=2,\ldots,k-1,
\end{split}
\end{equation}
where
$\bar{A}_{i,1}(r,C,S)$ is a polynomial in $r$ of degree at most $n_2$ without constant term, and
\begin{equation}\label{e3.5.1}
\begin{split}
\bar{A}_{i,0}(C,S)=q_0^{i+1}(C,S)C-p_0^{i+1}(C,S)S,\quad i=2,\ldots,k-1.
\end{split}
\end{equation}

We recall that, given any real value $|\eta|<1$, the following expansion holds:
\[\frac{1}{1+\eta}=\sum_{h_1\geq0}(-1)^{h_1}\eta^{h_1}.\]
Thus, equation \eqref{e3.4} can be written as
\begin{equation}\label{e3.6}
\begin{split}
\frac{dr}{d\theta}&=\left(\sum_{h_2=1}^{n_2+k-1}\frac{B_{h_2}}{r}\varepsilon^{h_2}\right)
\left[1+\sum_{h_1\geq1}(-1)^{h_1}\left(\sum_{h_2=1}^{n_2+k-1}\frac{A_{h_2}}{r}\varepsilon^{h_2}\right)^{h_1}\right]\\
&=\left(\sum_{h_2=1}^{k}\frac{B_{h_2}}{r}\varepsilon^{h_2}\right)\times\Bigg[1-\left(\sum_{h_2=1}^{k-1}\frac{\bar{A}_{h_2,0}+\bar{A}_{h_2,1}}{r}\varepsilon^{h_2}\right)+\cdots\\
&+(-1)^{k-1}\left(\sum_{h_2=1}^{k-1}\frac{\bar{A}_{h_2,0}+\bar{A}_{h_2,1}}{r}\varepsilon^{h_2}\right)^{k-1}\Bigg]+\varepsilon^{k+1}R(\varepsilon,C,S),\\
&=\sum_{i=1}^k\varepsilon^iF_i(r,C,S)+\mathcal{O}(\varepsilon^{k+1}),
\end{split}
\end{equation}
where
\begin{equation}\label{e3.7}
\begin{split}
F_1&=\frac{B_1}{r},\quad F_2=\frac{rB_2-A_1B_1}{r^2},\\
F_3&=\frac{r^2B_3-rA_1B_2-rA_2B_1+A_1^2B_1}{r^3}\nonumber
\end{split}
\end{equation}
and the expressions of $F_i$ for $i=4,\ldots,k$ are linear combination of a kind of functions in the form $r^{\alpha_1}A_{j_1}^{\alpha_2}B_{j_2}$ with $-i\leq\alpha_1\leq-1$, $1\leq j_1,\alpha_2\leq i-1$, and $1\leq j_2\leq i$ (here we have avoided the dependence on $(r,C,S)$ to simplify the notation). Recalling the property that $B_i$ is a polynomial in $r$ of degree at most $n_2+1$ without constant term and $A_i$ is a polynomial in $r$ of degree at most $n_2$ with constant term, we find that $F_1$ is a polynomial in $r$ of degree at most $n_2\geq2$ and $F_i$ is a rational function in $r$ of the form
\begin{equation}\label{e3.7.1}
\begin{split}
F_i=\bar{F}_i(r,C,S)/r^{i-1},\quad i=2,\ldots,k,
\end{split}
\end{equation}
where $\bar{F}_i(r,C,S)$ is a polynomial in $r$ of degree at most $in_2$.

In what follows, we first prove that there exist a non-negative integer $\nu_i$ and a polynomial function $\bar{f}_i(r)=\sum_{j=0}^{N_i}c_jr^j$, such that $r^{\nu_i}f_i(r)=\bar{f}_i(r)$ for $i=1,\ldots,k$, then we provide the bounds for the numbers $\nu_i$ and $N_i$.

Let $R^{SC}=\{r^{\lambda_1}\sin^{\lambda_2}\theta\cos^{\lambda_3}\theta: \lambda_1\in\mathbb{Z},\lambda_2,\lambda_3\in\mathbb{N}\}$ be a set of functions. It is obvious that each $F_i$ in \eqref{e3.6} (or \eqref{e3.7.1}) is a function generated by linear combination of elements of $R^{SC}$. Note that the explicit expression of $F_1$ is of the form
\begin{equation}\label{e3.8}
\begin{split}
F_1(r,C,S)&=r^2[P_2(C,S)C+Q_2(C,S)S]\\
&\quad+\sum_{t=0}^1r^{t}[p_{t}^{2-t}(C,S)C+q_{t}^{2-t}(C,S)S].\nonumber
\end{split}
\end{equation}
Now refer to \eqref{eq3.3.0}, it is easy to check that $y_1(\theta,r)$ is a function generated by linear combination of elements of the set of functions in the form $\{\theta r,r^{j_1}\sin^{j_2}\theta\cos^{j_3}\theta\}$ with $0\leq j_1\leq2$ and $0\leq j_2,j_3\leq3$.

Let $\bar{R}=R^{SC}\times\Theta=\{\theta^{\lambda_0}r^{\lambda_1}\sin^{\lambda_2}\theta\cos^{\lambda_3}\theta\}$ be a set of functions with $\lambda_0\in \mathbb{N}_+$. We denote by $\mbox{Span}(\bar{R})$ be the set of functions generated by linear combination of elements of $\bar{R}$. Next, we will show that the integral function $y_i(\theta,r)\in \mbox{Span}(\bar{R})$ for $i=2,\ldots,k$.

First, it is critical to observe that, the resulting form of $\partial^LF_i(\theta,r)$ is a function generated by linear combination of elements of $R^{SC}$. Since $y_1(\theta,r)$ contains $\theta$, the function in the square bracket of \eqref{eq3.3.0} is in $\mbox{Span}(\bar{R})$. In order to prove $y_i(\theta,r)\in \mbox{Span}(\bar{R})$, we need to consider the following integral equation:
\begin{equation}
\begin{split}
M_{i,j,k}=\int_0^{\theta}s^i\sin^js\cos^ks ds,\quad i,j,k\in \mathbb{N}_+.\nonumber
\end{split}
\end{equation}
Second, we claim that $M_{i,j,k}\in\mbox{Span}(\bar{R})$.
It follows from Lemma \ref{lemA2} that $M_{i,j,k}\in\mbox{Span}(\bar{R})$ if and only if $M_{i-1,j+1,k-1}\in\mbox{Span}(\bar{R})$ and $M_{i,j+2,k-2}\in\mbox{Span}(\bar{R})$.

Reuse the recursive formula \eqref{e3.11} until the subscript $i=0$ or $k=0$. In this way it suffices to consider $M_{0,j,k}\in\mbox{Span}(\bar{R})$ and $M_{i,j,0}\in\mbox{Span}(\bar{R})$. It is easy to judge that $M_{0,j,k}\in\mbox{Span}(\bar{R})$, so we focus on the proof of $M_{i,j,0}\in\mbox{Span}(\bar{R})$.
By using Lemma \ref{lemA2} and reusing the recursive formula \eqref{e3.14}, we conclude that $M_{i,j,0}\in\mbox{Span}(\bar{R})$. Then $M_{i,j,k}\in\mbox{Span}(\bar{R})$. Thus the desired result $y_i(\theta,r)\in \mbox{Span}(\bar{R})$ holds.

Finally, letting $\theta=2\pi$ in $y_i(\theta,r)$ (equation \eqref{eq2.01.2}) and taking into accounting the following formulae
\begin{equation}\label{e3.15}
\begin{split}
\cos(2\pi)=1,\quad \sin(2\pi)=0,\nonumber
\end{split}
\end{equation}
we prove that there exist a non-negative integer $\nu_i$ and a polynomial function $\bar{f}_i(r)=\sum_{j=0}^{N_i}c_jr^j$, such that $r^{\nu_i}f_i(r)=\bar{f}_i(r)$ for $i=1,\ldots,k$.

Next we provide the bounds for the numbers $\mu_i$ and $N_i$.

Case $i=1$, since $F_1$ is a polynomial in $r$ of degree 2 (at most $n_2$), we have by equation \eqref{eq3.3.0} that $y_1$ is a polynomial in $r$ of degree 2 (at most $n_2$).

We assume, by the induction hypothesis, that $y_i$ is a rational function in $r$ of the form
\begin{equation}\label{e3.7.2}
\begin{split}
y_{i-1}=\bar{y}_{i-1}(r,C,S)/r^{i-2},\quad i=2,\ldots,k,
\end{split}
\end{equation}
where $\bar{y}_{i-1}(r,C,S)$ is a polynomial in $r$ of degree at most $(i-1)n_2$.

In the expression of $y_i$ given in \eqref{eq3.3.0}, there only appear the previous functions $y_j$, for $1\leq j\leq i-1$. Now by using equation \eqref{e3.7.1} and Lemma \ref{lemA1}, for a given integer $\ell$ with $1\leq\ell\leq i-1$, we have the following summation function
\begin{equation}\label{e3.7.3}
\begin{split}
&\sum_{S_{\ell}}\partial^LF_{i-\ell}(\theta,r)\prod_{j=1}^{\ell}y_j(\theta,r)^{b_j}\\
&=\sum_{S_{\ell}}\frac{\bar{F}_{i-\ell}}{r^{i-\ell-1+L}}y_1^{b_1}y_2^{b_2}\cdots y_{\ell}^{b_{\ell}},\\
&=\sum_{S_{\ell}}\frac{\bar{F}_{i-\ell}}{r^{i-\ell-1+L}}\bar{y}_{1}^{b_1}\left(\frac{\bar{y}_{2}}{r}\right)^{b_2}\cdots\left(\frac{\bar{y}_{\ell}}{r^{\ell-1}}\right)^{b_{\ell}},\\
&=\sum_{S_{\ell}}\frac{\bar{F}_{i-\ell}}{r^{i-1}}\bar{y}_{1}^{b_1}\bar{y}_{2}^{b_2}\cdots\bar{y}_{\ell}^{b_{\ell}}.
\end{split}
\end{equation}
We have used the equalities $L=b_1+b_2+\cdots+b_{\ell}$ and $b_1+2b_2+\cdots+\ell b_{\ell}={\ell}$ to simplify \eqref{e3.7.3}. Combining equations \eqref{e3.7.1} and \eqref{e3.7.2}, we know that the numerator of the expression \eqref{e3.7.3} is a polynomial in $r$ with degree at most
\[(i-\ell)n_2+n_2(b_1+2b_2+\cdots+\ell b_{\ell})=in_2.\]
Thus $y_i$ is a rational function in $r$ of the form
\begin{equation}\label{e3.7.4}
\begin{split}
y_{i}=\bar{y}_{i}(r,C,S)/r^{i-1},\quad i=1,\ldots,k.
\end{split}
\end{equation}
where $\bar{y}_{i}(r,C,S)$ is a polynomial in $r$ of degree at most $in_2$.

Herewith, we prove that there exist a non-negative integer $\nu_i\leq i-1$ and a polynomial function $\bar{f}_i(r)=\sum_{j=0}^{N_i}c_jr^j$ with $N_i\leq in_2$, such that $r^{\nu_i}f_i(r)=\bar{f}_i(r)$ for $i=1,\ldots,k$.

Next we will show that the coefficients $c_j$ of $\bar{f}_i(r)$ is a polynomial in $\pi$ of degree at most $i$. To do this, we just consider the dependence on $\theta$ for brevity.

We define $\Delta Y_i=\{\theta^{\Delta_i}\sin^{k_1}\theta\cos^{k_2}\theta: 0\leq\Delta_i\leq i,k_1,k_2\in\mathbb{N}\}$ be a set of functions.
We claim that the following property holds
\begin{equation}\label{e3.7.5}
\begin{split}
y_{i}(\theta)\in\mbox{Span}(\Delta Y_i),\quad i=1,\ldots,k.
\end{split}
\end{equation}
We begin to prove this by induction.

Case $i=1$, we recall that $y_1(\theta)$ is a function generated by linear combination of elements of the set of functions $\{\theta,\sin^{j_2}\theta\cos^{j_3}\theta\}$ with $0\leq j_2,j_3\leq3$. It obvious that $y_1(2\pi)$ is a polynomial in $\pi$ of degree at most $1$.

Suppose that for $\bar{k}\leq i-1$, property \eqref{e3.7.5} holds, then for $\bar{k}=i$, using the integral equation \eqref{eq3.3.0}, for a given integer $\ell$ with $1\leq\ell\leq i-1$, we have
\begin{equation}\label{e3.7.6}
\begin{split}
\sum_{S_{\ell}}\prod_{j=1}^{\ell}y_j(\theta)^{b_j}=\sum_{S_{\ell}}y_1(\theta)^{b_1}y_2(\theta)^{b_2}\cdots y_{\ell}(\theta)^{b_{\ell}}.
\end{split}
\end{equation}
Note that by the induction hypothesis, we have $y_j(\theta)\in\mbox{Span}(\Delta Y_j)$ for $1\leq j\leq\ell\leq i-1$. Then the degree of $\theta$ in \eqref{e3.7.6} is at most $b_1+2b_2+\cdots+\ell b_{\ell}=\ell\leq i-1$.

By using Lemma \ref{lemA2} (the integral formulae \eqref{e3.11} and \eqref{e3.14}) and noting also that $\int_0^{\theta}\theta^{i-1}d\theta=\theta^i/i$, we find that $y_i(\theta)\in\mbox{Span}(\Delta Y_i)$. Finally, letting $\theta=2\pi$, we prove that $y_i(2\pi)$ is a polynomial in $\pi$ of degree at most $i$. That is to say, the coefficients $c_j$ of $\bar{f}_i(r)$ is a polynomial in $\pi$ of degree at most $i$.

Up to now, we finish the proof of Theorem \ref{tt1}.

\section{Fifth Order Averaging Formulae}\label{B}
We present some formulas computed by
{\bf Averformula} (Algorithm 2).
\begin{equation}\label{B01}
\begin{split}
Y_1&=\Big\{\int_0^{\theta}F_1(s,z)ds,\Big[\int_0^{\theta}F_1d\theta,\int_0^{2\pi}F_1d\theta\Big]\Big\},\\
Y_2&=\Big\{\int_0^{\theta}\Big(2F_2(s,z)+2\frac{\partial F_1(s,z)}{\partial z}\Big)y_1(s,z)ds,\\
&\Big[\int_0^{\theta}2\Big(F_2+\frac{\partial F_1}{\partial r}y_1\Big)d\theta,\int_0^{2\pi}\Big(F_2+\frac{\partial F_1}{\partial r}y_1\Big)d\theta\Big]\Big\}.
\end{split}
\end{equation}

\begin{equation}\label{BB1}
\begin{split}
y_k(\theta,z)=\int_0^{\theta}\mathbb{F}_k(s,z)ds,~\mbox{for}~k=1,\ldots,5,
\end{split}
\end{equation}
where
\begin{equation}
\begin{split}
\mathbb{F}_1(s,z)&=F_1(s,z),\\
\mathbb{F}_2(s,z)&=2F_2(s,z)+2\frac{\partial F_1(s,z)}{\partial z}y_1(s,z),\\
\mathbb{F}_3(s,z)&=6F_3(s,z)+6\frac{\partial F_2(s,z)}{\partial z}y_1(s,z)\\&\quad+3\frac{\partial F_1(s,z)}{\partial z}y_2(s,z)+3\frac{\partial^2 F_1(s,z)}{\partial z^2}y_1(s,z)^2,\\
\mathbb{F}_4(s,z)&=24F_4(s,z)+24\frac{\partial F_3(s,z)}{\partial z}y_1(s,z)\\
&\quad+12\frac{\partial F_2(s,z)}{\partial z}y_2(s,z)+12\frac{\partial^2 F_2(s,z)}{\partial z^2}y_1(s,z)^2\\
&\quad+4\frac{\partial F_1(s,z)}{\partial z}y_3(s,z)+12\frac{\partial^2 F_1(s,z)}{\partial z^2}y_1(s,z)y_2(s,z)\\
&\quad+4\frac{\partial^3 F_1(s,z)}{\partial z^3}y_1(s,z)^3,\\
\mathbb{F}_5(s,z)&=120F_5(s,z)+120\frac{\partial F_4(s,z)}{\partial z}y_1(s,z)\\
&\quad+60\frac{\partial F_3(s,z)}{\partial z}y_2(s,z)+60\frac{\partial^2 F_3(s,z)}{\partial z^2}y_1(s,z)^2\\
&\quad+20\frac{\partial F_2(s,z)}{\partial z}y_3(s,z)+60\frac{\partial^2 F_2(s,z)}{\partial z^2}y_1(s,z)y_2(s,z)\\
&\quad+20\frac{\partial^3 F_2(s,z)}{\partial z^3}y_1(s,z)^3+5\frac{\partial F_1(s,z)}{\partial z}y_4(s,z)\\
&\quad+15\frac{\partial^2 F_1(s,z)}{\partial z^2}y_2(s,z)^2\\
&\quad+20\frac{\partial^2 F_1(s,z)}{\partial z^2}y_1(s,z)y_3(s,z)\\
&\quad+30\frac{\partial^3 F_1(s,z)}{\partial z^3}y_1(s,z)^2y_2(s,z)\\
&\quad+5\frac{\partial^4 F_1(s,z)}{\partial z^4}y_1(s,z)^4.\\
\nonumber
\end{split}
\end{equation}

\section{Quadratic Systems}\label{C}
This appendix is an overflow from Subsection \ref{sect4.3}. In this subsection we report some results on quadratic differential
systems with centers of the form
\begin{equation}\label{eq4.3.0}
\begin{split}
\dot{x}&=-y+a_{20}x^2+a_{11}xy+a_{02}y^2,\\ \dot{y}&=x+b_{20}x^2+b_{11}xy+b_{02}y^2.
\end{split}
\end{equation}
\subsection{Isochronous Quadratic Centers}\label{C.1}
We recall that the classification of such quadratic system having an isochronous center at the origin is due to Loud \cite{wsl64}. He proved that after an affine change of variables and a rescaling of time any quadratic isochronous center can be written as one of the following four systems.
\begin{equation}
\begin{split}
&\mathbb{S}_1: \quad\dot{x}=-y+x^2-y^2,\quad \dot{y}=x(1+2y),\\
&\mathbb{S}_2: \quad\dot{x}=-y+x^2,\quad \dot{y}=x(1+y),\\
&\mathbb{S}_3: \quad\dot{x}=-y-\frac{4}{3}x^2,\quad \dot{y}=x(1-\frac{16}{3}y),\\
&\mathbb{S}_4: \quad\dot{x}=-y+\frac{16}{3}x^2-\frac{4}{3}y^2,\quad \dot{y}=x(1+\frac{8}{3}y).\nonumber
\end{split}
\end{equation}
In the case of limit cycles bifurcating from the periodic orbits surrounding such quadratic isochronous centers, Chicone and Jacobs in \cite{cm91} proved that, under all quadratic polynomial perturbations, at most 1 limit cycle bifurcate from the periodic orbits of $\mathbb{S}_1$, and at most 2 limit cycles bifurcate from the periodic orbits of $\mathbb{S}_2$, $\mathbb{S}_3$ and $\mathbb{S}_4$. Iliev obtained in \cite{ii98} that the cyclicity of the period annulus surrounding the center $\mathbb{S}_1$ is also 2.

Here we focus on the study of limit cycles that bifurcate from such quadratic isochronous centers, and the perturbation terms in \eqref{eq2.04} are taken as follows:
\begin{equation}
\begin{split}
p_{\alpha}(x,y,\varepsilon)&=\sum_{s=1}^{8}\sum_{j=1}^2\sum_{i=0}^j\varepsilon^sc_{s,i,j-i}x^{i}y^{j-i},\\
q_{\beta}(x,y,\varepsilon)&=\sum_{s=1}^{8}\sum_{j=1}^2\sum_{i=0}^j\varepsilon^sd_{s,i,j-i}x^{i}y^{j-i}.\nonumber
\end{split}
\end{equation}
Since the calculations and arguments are quite similar to those used in the previous proofs, we just summarize our results in the following Table \ref{tb02}.

\begin{table}[h]
\caption{Number of limit cycles for quadratic isochronous centers}\label{tb02}
\begin{center}
\begin{tabular}{ccccc}
  \hline
  Averaging order & $\mathbb{S}_1$ & $\mathbb{S}_2$ & $\mathbb{S}_3$ & $\mathbb{S}_4$\\ \hline
   1,2 & 0 & 0 & 0 & 0\\ \hline
   3,4 & 1 & 1 & 1 & 1\\ \hline
   5 & 1 & 2 & 2 & 2\\ \hline
   6 & 2 & 2 & 2 & 2\\ \hline
   7 & 2 & 2 & 2 & 2\\ \hline
   8 & - & - & - & -\\ \hline
\end{tabular}
\end{center}
\end{table}

We remark that the computation of the 8-th order averaged functions would be too demanding (Maple was consuming too much of the CPU during a calculation). Since we are providing lower bounds for the maximum number of limit cycles that bifurcate from the origin of such quadratic systems, the results could be improved using higher orders of the averaging theorem. Thus, we have a conjecture that some of the numbers 2 obtained in Table \ref{tb02} may could be increased to 3 as Bautin \cite{nn54} proved that in a sufficiently small neighborhood $\Omega$ of a quadratic center, all sufficiently small quadratic perturbations of the given system have at most three limit cycles in $\Omega$, and that three arbitrarily small-amplitude limit cycles can be produced.

\subsection{Reversible System with Two Centers}\label{C.2}
Next, we study the following reversible quadratic system
\begin{equation}\label{eqc.1}
\begin{split}
\dot{x}=y+a_1xy,\quad \dot{y}=-x+x^2+a_4y^2,\\
\end{split}
\end{equation}
with two centers $(0,0)$ and $(1,0)$, where $a_1$ and $a_4$ are real coefficients satisfying $a_1<-1$ (\cite{ym12}, Theorem 1). The authors in \cite{ym12} proved that 3 limit cycles can bifurcate from the center $(0,0)$ under the case $a_4=(a_1-5)/3$ based on the Melnikov function method by adding perturbed terms $p(x,y,\varepsilon)=\varepsilon a_{10}x$ and $q(x,y,\varepsilon)=\varepsilon (b_{01}y+b_{11}xy)$ (see Section 3 of \cite{ym12} for more details). Here using the averaging method we study system \eqref{eqc.1} by choosing a similar kind of perturbed terms, and then we give some remarks on the relations between these two methods.

First, introducing $x=-\bar{x}$, $y=\bar{y}$ into \eqref{eqc.1} results in
\begin{equation}\label{eqc.2}
\begin{split}
\dot{\bar{x}}=-\bar{y}+a_1\bar{x}\bar{y},\quad \dot{\bar{y}}=\bar{x}+\bar{x}^2+a_4\bar{y}^2,\\
\end{split}
\end{equation}
which is similar to system \eqref{eq2.03}, now has centers $(0,0)$ and $(-1,0)$. We then consider the perturbations
\begin{equation}\label{eqc.3}
\begin{split}
\dot{\bar{x}}&=-\bar{y}+a_1\bar{x}\bar{y}+\sum_{s=1}^{10}\varepsilon^sc_{s,1,0}\bar{x},\\
\dot{\bar{y}}&=\bar{x}+\bar{x}^2+a_4\bar{y}^2+\sum_{s=1}^{10}\varepsilon^s(d_{s,0,1}\bar{y}+d_{s,1,1}\bar{x}\bar{y})
\end{split}
\end{equation}
of \eqref{eqc.2}, where $c_{s,i,j}$ and $d_{s,i,j}$ are real parameters for $s=1,\ldots,10$.

Computing the averaged functions under the case $a_4+1\neq0$, we obtain the expressions of $f_k$'s up to $k=5$ as follows:
\begin{equation}
\begin{split}
&f_1(r)=\pi r(c_{1,1,0}+d_{1,0,1}),\quad f_2(r)=\pi r(c_{2,1,0}+d_{2,0,1}),\\
&f_3(r)=\frac{\pi r}{4}\big(\bar{A}_2r^2+\bar{A}_0\big),\quad f_4(r)=\frac{\pi r}{4}\big(\bar{B}_2r^2+\bar{B}_0\big),\\
&f_5(r)=\frac{\pi r}{24(a_4+1)^2}\big(\bar{C}_4r^4+\bar{C}_2r^2+\bar{C}_0\big),\nonumber
\end{split}
\end{equation}
and for $k=6,\ldots,10$, we have
\begin{equation}\label{eqc.4}
\begin{split}
f_k(r)=\frac{\pi r}{24}\big(\bar{D}_{4,k}r^4+\bar{D}_{2,k}r^2+\bar{D}_{0,k}\big),
\end{split}
\end{equation}
where
\begin{equation}
\begin{split}
\bar{A}_2&=(a_1+2a_4)(a_1-a_4-1)c_{1,1,0}-(a_4+1)d_{1,1,1},\\ \bar{A}_0&=4(c_{3,1,0}+d_{3,0,1}),\\
\bar{B}_2&=(a_1+2a_4)(a_1-a_4-1)c_{2,1,0}-(a_4+1)d_{2,1,1},\\ \bar{B}_0&=4(c_{4,1,0}+d_{4,0,1}),\\
\bar{C}_4&=a_1(a_4+1)^2(a_1-a_4)(a_1+2a_4)(a_1-3a_4-5)c_{2,1,0},\\
\bar{C}_2&=-6a_1(a_1+2a_4)(a_1-a_4-1)(a_1+a_4-1)c_{1,1,0}^3\\
&\quad+6(a_4+1)^2(a_1+2a_4)(a_1-a_4-1)c_{3,1,0}\\
&\quad-6(a_4+1)^3d_{3,1,1},\\
\bar{C}_0&=24(a_4+1)^2(c_{5,1,0}+d_{5,0,1}),\\
\bar{D}_{4,k}&=a_1(a_1-a_4)(a_1+2a_4)(a_1-3a_4-5)c_{k-4,1,0},\\
\bar{D}_{2,k}&=6(a_1+2a_4)(a_1-a_4-1)c_{k-2,1,0}-6(a_4+1)d_{k-2,1,1},\\
\bar{D}_{0,k}&=24(c_{k,1,0}+d_{k,0,1}).\nonumber
\end{split}
\end{equation}
In view of these expressions of the obtained averaged functions, we find that the $k$-th ($k=5,\ldots,10$) order averaging provides the existence of at most two small-amplitude limit cycles of the perturbed system \eqref{eqc.3} and this number can be reached under the condition $(a_1-a_4)(a_1+2a_4)(a_1-3a_4-5)\neq0$. We conjecture that the maximal number of small-amplitude limit cycles of the perturbed system \eqref{eqc.3} is 2 up to the $k$-th order averaging for any $k\geq6$. This problem might be proved by using the recursive integral equation \eqref{eq3.3.0}.

Our result on the quadratic system \eqref{eqc.1} describes the different mechanisms between the averaging method and the Melnikov function method when studying the number of limit cycles that can appear in a Hopf bifurcation from centers. The number of limit cycles obtained by the averaging method in some cases (under a similar kind of perturbations) seems to be less than the number obtained by the Melnikov function method. We want to say that in the study of the limit cycles which bifurcate from a period annulus surrounding the center, the equivalence between the averaging method and the Melnikov function method at any order has been proved in \cite{ab17,mvx16}.

\end{document}